\newtheorem{definition}{Definition}
\newtheorem{lemma}{Lemma}
\newtheorem{theorem}{Theorem}
\newtheorem{thex}{Example}
\newcommand{\lsem}{\mbox{$\lbrack\hspace{-0.3ex}\lbrack$}}
\newcommand{\rsem}{\mbox{$\rbrack\hspace{-0.3ex}\rbrack$}}
\newcommand{\sempar}[1]{\lsem #1 \rsem}
\newcommand{\true}{\top}
\newcommand{\false}{\bot}
\newcommand{\tvar}{\mathtt{tt}}
\newcommand{\fvar}{\mathtt{ff}}
\newcommand{\suc}{\mathrm{S}}
\newcommand{\pred}{\mathrm{P}}
\newcommand{\f}{\mathtt{f}}
\newcommand{\x}{\mathtt{x}}
\newcommand{\p}{p}
\newcommand{\ite}[3]{\mathtt{if}\ #1\ \mathtt{then}\ #2\ \mathtt{else}\ #3}
\newcommand{\ifthel}[3]{\mathtt{if}\ #1\ \mathtt{then}\ #2\ \mathtt{else}\ #3}
\newcommand{\maxc}{\mathrm{maxc}}
\newcommand{\minc}{\mathrm{minc}}
\renewcommand{\boolean}{\mathbf{bool}}
\newcommand{\indx}{\mathbf{ind}}
\newcommand{\mybox}[1]{\fbox{\parbox{1 \textwidth}{#1}}}
\newcommand{\set}[1]{\left\{#1\right\}}
\title{Rational functions via recursive schemes}
\author{Siddharth Bhaskar, Jane Chandlee, and Adam Jardine}
\date{\today}
\begin{document}

\maketitle

\section{Introduction}

One of the central notions of formal language theory is that of \emph{regularity}. The paradigmatic example of a regular language of finite strings over a finite alphabet has been generalized in several directions. There are now robust notions of regularity for other types of data (e.g., infinite strings and trees), as well as for functions as opposed to languages. Each notion of regularity can typically be characterized three ways: mechanically (via finite automata), algebraically (via grammars), and logically (via fragments of monadic second-order logic). Much of the enterprise of formal language theory has concerned itself with extending the notion of regularity and making it more robust.

The present paper is no exception. Here we are concerned with \emph{function classes} over finite strings. These are perhaps most easily approached through automata theory: a regular function is anything computable by a \emph{transducer}, a finite automaton whose transitions can print strings to output. However, life is immediately more complicated: the equivalence between deterministic, nondeterministic, one-way, and two-way automata are broken into three classes in the case of transducers:
\begin{enumerate}
    \item One-way deterministic transducers compute the class of \emph{subsequential} functions.\footnote{Sometimes called \emph{sequential}, cf. \cite{sakarovitch09}.} These are further factored into \emph{left-} and \emph{right-}subsequential depending on which way the transducer reads the string.
    
    \item One-way nondeterministic transducers compute the strictly larger class of \emph{rational} functions; these are closed under string reversal even though the transducer is one-way, and we do not have to factor into left and right.\footnote{Note that whenever considering nondeterministic transducers, we restrict ourselves to those that do compute functions, i.e., the output for a given input is invariant of the sequence of nondeterministic choices.}
    
    \item Two-way deterministic and nondeterministic transducers compute the yet strictly larger class of \emph{regular} functions. (These are obviously closed under string reversal.)
\end{enumerate}

Each of these classes admits a logical characterization as well. These are in the spirit of finite model theory: we interpret each string over a fixed alphabet $\Sigma$ as a finite structure over a fixed signature, also called $\Sigma$. The domain of the structure is the set of indices of the string, we have a predicate for each character in the alphabet that picks out those indices carrying that character, and we have some way of comparing or operating on indices, such as a linear order or successor and predecessor functions.

Under this identification of strings with finite structures, string languages can be identified with spectra. This is the fundamental bridge between automata-theoretic and logical characterizations of complexity classes. The prototypical such characterization is the identification of regular string languages with spectra of sentences in monadic second-order (MSO) logic \cite{buchi60}. This was extended by Engelfriet and Hoogeboom to regular \emph{functions} and MSO-definable \emph{interpretations} \cite{engelfriethoogeboom01}.

\paragraph{A word on interpretations}
Given two signatures $L$ and $K$, an interpretation $L \to K$ is a $K$-formula for every symbol of $L$. This gives a map from $K$-structures to $L$-structures: given a $K$-structure, we can interpret any $L$-symbol by its $K$-definition.\footnote{Note the reversal in direction from $K$ to $L$; properly formalized, we can define a contravariant semantics functor from interpretations to the maps they define.}
If $L$ and $K$ are string alphabets $\Sigma$ and $\Gamma$, then we can identify the set of finite $L$-structures and $K$-structures with $\Sigma^\star$ and $\Gamma^\star$ respectively, and an interpretation $L \to K$ defines a map $\Gamma^\star \to \Sigma^\star$.

In formal language theory, we are interested in defining functions which increase the string by at most a constant multiplicative factor.
Following the work of Engelfriet and Hoogeboom \cite{engelfriethoogeboom01} (and ultimately Courcelle \cite{courcelle94}), we identify indices of the defined $L$-string with single indices 
of the original $K$-string. To increase the constant factor beyond 1, we are allowed to do this a fixed finite number of times and merge the resulting copies. Henceforth, when we say \emph{interpretation}, we mean an interpretation in this sense.
\footnote{The term \emph{interpretation} comes from model theory, where the domain of the output structure is typically identified with \emph{tuples}, as opposed to copies, of the input structure. (For example, complex numbers can be identified with pairs of real numbers.) An alternate term used in formal language theory is \emph{transduction}.}

\paragraph{Order-preservation.}
Fix an interpretation $\pi$ from $\Sigma$ to $\Gamma$. Then if $s$ is some $\Gamma$-string and $|s|$ is its set of indices, $\pi$ defines a $\Sigma$-string whose set of indices is $m \times |s|$ for some fixed $m \in \omega$. There is a $\Gamma$-formula in $\pi$ which defines the order of these indices in the output string.
An \emph{order-preserving} interpretation is one which says: no, there is no such formula. Rather, you \emph{must} consider the indices of the output string in the natural lexicographic order on $m \times |s|$. 
Bojanczyk \cite{bojanczyk14} and Filiot \cite{filiot15} found that by refining MSO-interpretations to order-preserving MSO-interpretations, the resulting class of computable functions drops from regular to rational.

\paragraph{Boolean monadic recursive schemes}
The present authors \cite{bhaskaretal20} introduced the notion of a \emph{boolean monadic recursive scheme} (BMRS), a weak programming language on strings, and found that \emph{one-way} order-preserving BMRS interpretations compute exactly subsequential functions. BMRS are motivated by considerations in computational phonology, as they provide a formalism which simultaneously enforces the ``correct'' computational upper bound while being flexible enough that linguistically significant phenomena may be easily legible in the code \cite{CJ21}.

\paragraph{Our contributions}
The \emph{syntactic composition problem} is: given two interpretations computing functions $g$ and $f$, find an interpretation computing $g \circ f$. In the present paper we solve the syntactic composition problem for order-preserving BMRS interpretations.

Standard interpretations are naturally compositional. Given interpretations $\pi : L \to K$ and $\rho : K \to J$, we can form an interpretation $L \to J$ by substituting each occurrence of a $K$-symbol in the formulas of $\pi$ by the corresponding $J$-formula of $\rho$. 

Order-preserving interpretations, however, are \emph{not} naturally compositional: they use the index ordering on the input string, but do not define them on the output string. Compounding this, order-preserving interpretations are allowed to skip indices of the output on which no character has been defined.\footnote{So, even if the indices of the output are identified with $m \times |s|$, the output string might have length less than $m|s|$.} Naively, it seems like composing order-preserving interpretations would require some basic arithmetic to count skipped indices, whereas BMRS lack any mechanism for counting.\footnote{We write BMRS for both the singular and plural of a boolean monadic recursive scheme. This acronym has come to be pronounced \emph{beamers} in the plural and \emph{beamer} in the singular via back-formation.}

Our core technical contribution is a fine analysis of the syntactic composition problem for order-preserving interpretations.
We factor this syntactic composition problem into four parts, three of which are routine and go through in practically any logic. The last is difficult in the absence of counting and requires, it seems, precisely the computational power of a BMRS.

As a consequence of this analysis, we find that \emph{order-preserving BMRS interpretations compute the class of rational functions}. Any rational function can be factored as a composition of a left and right subsequential function \cite{ElgotMezei}. By our previous paper, each of these can be computed by a one-way BMRS; by the current paper, their composition is computed by a BMRS. In other words, while each BMRS interpretation is easily seen to be an MSO interpretation, we show that the converse is true as well.
This result, combined with that of our previous paper, shows that there are natural BMRS characterizations of both the subsequential and rational functions---a characterization not (currently) available to MSO logic.

\paragraph{Related work}
The monadic fragments of various programming languages already attracted attention from the early days of schematology, as they were often found to combine nontrivial expressive power while being more tractable than general recursive programs.
In formal language theory, monadic logics such as MSO occupy a place of central importance. The relationship between MSO and various monadic first-order logics equipped with a mechanism for recursion (such as monadic least fixed-point logic and monadic datalog) is an important one, and these are known to be as powerful as MSO over string and tree data, as least for boolean queries \cite{Sch06}.\footnote{At least one paper notes that such results are folklore in the database community, cf. \cite{GK04}.}

Despite the differences in our approach---our choice of string primitives, considering interpretations instead of simple queries, etc.---we surmise that our characterization of rational functions by order-preserving BMRS interpretations could probably be cobbled together out of known results. What we believe we have, however, is a genuinely new proof---one that comes out of solving the syntactic composition problem for order-preserving BMRS interpretations instead of a direct simulation of a given MSO interpretation.

\paragraph{Structure of this paper}

Section \ref{s:strings} gives a model-theoretic definition of strings, Section \ref{s:bmrs} defines BMRS over these structures, and Section \ref{sec:interpretations} defines order-preserving BMRS interpretations and shows how they define functions on strings.
The technical work is presented in Sections \ref{s:composition} and \ref{s:blank}, which show how to compose order-preserving BMRS interpretations; Section \ref{s:rational} uses this to establish that such interpretations capture rational functions.
Finally, Section \ref{s:conclusion} discusses further directions.

\section{Strings as finite structures}
\label{s:strings}

We are concerned with finite \emph{strings} over a finite \emph{alphabet}, or some nonempty set of symbols.
We typically use capital Greek letters (e.g., $\Sigma$, $\Gamma$, $\Delta$) to name alphabets, and lowercase Latin letters (e.g., $s$, $t$) to name strings.

Strings over an alphabet $\Sigma$ can be associated with a first-order \emph{signature}, also written $\Sigma$. 
\begin{definition}
Given an alphabet $\Sigma$, the signature $\Sigma$ consists of:
\begin{itemize}
    \item a monadic (i.e., unary) relation symbol $\sigma$ for each character $\sigma \in \Sigma$,
    \item monadic relation symbols $\max$ and $\min$, and
    \item monadic function symbols $\suc$ and $\pred$ (for \emph{successor} and \emph{predecessor}).
\end{itemize}
\end{definition}

Where the distinctions are necessary, we will refer to the elements of the alphabet $\Sigma$ as the \emph{characters} of $\Sigma$, and the functions and relations in the signature $\Sigma$ as the \emph{primitives} of $\Sigma$. 

We also identify each string in $\Sigma^\star$ with a finite $\Sigma$-structure as follows.

\begin{definition}
Given a string $s \in \Sigma^\star$, let the $\Sigma$-structure, also called $s$, have as a domain the set of indices of $s$ (which we represent with an initial segment of the natural numbers), and for an index $x$ of $s$, let
\begin{itemize}
    \item $s \models \sigma(x)$ iff character of $s$ at index $x$ is $\sigma$,
    
    \item $s \models \min(x)$ iff $x$ is the least index $0$,
    
    \item $s \models \max(x)$ iff $x$ is the greatest index $|s|-1$,
    
    \item $s \models \suc(x) = y$ iff $y = x + 1$ or $x = y = |s|-1$, and 
    
    \item $s \models \pred(x) = y$ iff $y = x - 1$ or $x = y = 0$.
\end{itemize}
\end{definition}

Note that successor and predecessor fix the greatest and least indices respectively. Note as well that a finite $\Sigma$-structure is a string if and only if for each $x < |s|$, $s \models \sigma(x)$ for a unique character $\sigma \in \Sigma$.
Without risk of ambiguity we often conflate strings in $\Sigma^\star$ with their first-order structure over the signature $\Sigma$.
Furthermore, we adopt the set-theoretic convention that identifies a natural number with its set of predecessors, i.e., $n = \{0, 1, \dots , n-1\}$.
This allows us to identify the domain of a string $s$ with its \emph{length} $|s|$.

\section{Boolean monadic recursive schemes}
\label{s:bmrs}

We now define \emph{boolean monadic recursive schemes} (BMRS), a programming language first introduced in \cite{bhaskaretal20}. These are presented in a pure functional style and equipped with a standard big-step environment-based semantics.\footnote{Our presentation is particularly influenced by McCarthy \cite{McCarthy59} via Moschovakis \cite{moschovakis19}.}

Programs are executed relative to a given finite string. Program variables range over two types of data: booleans $\boolean$ and string indices $\indx$. Each recursive function symbol in a boolean monadic recursive scheme is required to have type $\indx \to \boolean$; i.e., they are \emph{boolean}-valued and \emph{monadic} (i.e., have a single input variable), hence the name.

These twin requirements impose quite stringent limitations on the expressive power of boolean monadic recursive schemes. For example, our inability to program functions of type $\indx \to \indx$ means that we cannot do arithmetic on string indices to, e.g., locate the halfway point of a string. Our inability to program functions of type $\indx \times \indx \to \boolean$ means we cannot cheat by computing the graph relation of a function $\indx \to \indx$. The fact that inputs and outputs are of different types prohibits nested recursive calls.

We first define terms, then programs, and then their semantics. We use \texttt{typewriter} script for program syntax (i.e., program variables and keywords---the purely logical elements of our programming language). We fix a countably infinite set $\mathcal{F}$ of recursive function names, and a single variable $\x$ of type $\indx$.

\begin{definition}
Given a signature $\Sigma$, a \emph{$\Sigma$-term} is any term that can be derived from the inference rules in Figure \ref{fig:terms}.
\end{definition}

\begin{figure}
    \centering
    \mybox{
\begin{minipage}{.3 \textwidth}
\begin{prooftree}
\AxiomC{}
\UnaryInfC{$\x : \indx$}
\end{prooftree}
\end{minipage}
\begin{minipage}{.33 \textwidth}
\begin{prooftree}
\AxiomC{$T : \indx$}
\UnaryInfC{$\pred(T) : \indx$}
\end{prooftree}
\end{minipage}
\begin{minipage}{.33 \textwidth}
\begin{prooftree}
\AxiomC{$T : \indx$}
\UnaryInfC{$\suc(T) : \indx$}
\end{prooftree}
\end{minipage}
\\
    
\begin{minipage}{.3 \textwidth}
\begin{prooftree}
\AxiomC{}
\UnaryInfC{$\tvar : \boolean$}
\end{prooftree}
\end{minipage}
\begin{minipage}{.3 \textwidth}
\begin{prooftree}
\AxiomC{}
\UnaryInfC{$\fvar : \boolean$}
\end{prooftree}
\end{minipage}
\begin{minipage}{.33 \textwidth}
\begin{prooftree}
\AxiomC{$T : \indx$}
\AxiomC{$\f \in \mathcal{F}$}
\BinaryInfC{$\f(T) : \boolean$}
\end{prooftree}
\end{minipage}
\\

\begin{minipage}{.33 \textwidth}
\begin{prooftree}
\AxiomC{$T : \indx$}
\UnaryInfC{$\sigma(T) : \boolean$}
\end{prooftree}
\end{minipage}\begin{minipage}{.33 \textwidth}
\begin{prooftree}
\AxiomC{$T : \indx$}
\UnaryInfC{$\max(T) : \boolean$}
\end{prooftree}
\end{minipage}\begin{minipage}{.33 \textwidth}
\begin{prooftree}
\AxiomC{$T : \indx$}
\UnaryInfC{$\min(T) : \boolean$}
\end{prooftree}
\end{minipage}

\begin{center}
    \begin{prooftree}
    \AxiomC{$T_0 : \boolean$}
    \AxiomC{$T_1: \indx$}
    \AxiomC{$T_2 : \indx$}
    \TrinaryInfC{$\ifthel{T_0}{T_1}{T_2}: \indx$}
    \end{prooftree}
    
    \begin{prooftree}
    \AxiomC{$T_0 : \boolean$}
    \AxiomC{$T_1: \boolean$}
    \AxiomC{$T_2 : \boolean$}
    \TrinaryInfC{$\ifthel{T_0}{T_1}{T_2}: \boolean$}
    \end{prooftree}
\end{center}
    }
    \caption{$\Sigma$-terms. Here $\sigma$ ranges over the characters of $\Sigma$.}
    \label{fig:terms}
\end{figure}

\begin{definition}
A \emph{headless $\Sigma$-boolean monadic recursive scheme} is a finite set of lines
\begin{align*}
    \f_0(\x) &= T_0 \\
    \f_1(\x) &= T_1 \\
    &\vdots \\
    \f_k(\x) &= T_k
\end{align*}
such that for each $0 \le i \le k$, $T_i$ is a $\Sigma$-term which does not contain any other recursive function names other than $(\f_0,\dots,\f_k)$.\footnote{Note that this is a \emph{set} rather than a \emph{list} of lines; the order does not matter. However, overwhelming programming intuition compels us to present it like a list.}
\end{definition}

For a headless BMRS $p = \{ \f_i(\x) = T_i\}_{0 \le i \le k}$, by a \emph{$p$-term} we mean a term in which no recursive function name occurs besides $(\f_0,\dots,\f_k)$. In particular, each $T_i$ is a $p$-term.

\begin{definition}[Semantics]
Let $\Sigma$ be a signature, $s$ be a $\Sigma$-string, $x$ be an index of $s$, $\p$ be a headless $\Sigma$-BMRS, $T$ be a $p$-term, and $v$ be a value whose type agrees with $T$.
We define the five-place relation $s, x \vdash_\p T \to v$ according to the inference rules in Figure \ref{fig:semantics}.
\end{definition}

\begin{figure}
    \centering
    \mybox{
\begin{minipage}{.3 \textwidth}
\begin{prooftree}
\AxiomC{}
\UnaryInfC{$x \vdash \x \to x$}
\end{prooftree}
\end{minipage}
\begin{minipage}{.3 \textwidth}
\begin{prooftree}
\AxiomC{}
\UnaryInfC{$x \vdash \tvar \to \true$}
\end{prooftree}
\end{minipage}
\begin{minipage}{.3 \textwidth}
\begin{prooftree}
\AxiomC{}
\UnaryInfC{$x \vdash \fvar \to \false$}
\end{prooftree}
\end{minipage}

\begin{minipage}{.5 \textwidth}
\begin{prooftree}
\AxiomC{$x \vdash T \to v$}
\RightLabel{\scriptsize(if $v > 0$)}
\UnaryInfC{$x \vdash \pred(T) \to v-1$}
\end{prooftree}
\end{minipage}
\begin{minipage}{.5 \textwidth}
\begin{prooftree}
\AxiomC{$x \vdash T \to v$}
\RightLabel{\scriptsize(if $v = 0$)}
\UnaryInfC{$x \vdash \pred(T) \to v$}
\end{prooftree}
\end{minipage}

\begin{minipage}{.5 \textwidth}
\begin{prooftree}
\AxiomC{$x \vdash T \to v$}
\RightLabel{\scriptsize(if $v < |s| - 1$)}
\UnaryInfC{$x \vdash \suc(T) \to v+1$}
\end{prooftree}
\end{minipage}
\begin{minipage}{.5 \textwidth}
\begin{prooftree}
\AxiomC{$x \vdash T \to v$}
\RightLabel{\scriptsize(if $v = |s| - 1$)}
\UnaryInfC{$x \vdash \suc(T) \to v$}
\end{prooftree}
\end{minipage}

\begin{minipage}{.5 \textwidth}
\begin{prooftree}
\AxiomC{$x \vdash T \to v$}
\RightLabel{\scriptsize(if $v > 0$)}
\UnaryInfC{$x \vdash \min(T) \to \false$}
\end{prooftree}
\end{minipage}
\begin{minipage}{.5 \textwidth}
\begin{prooftree}
\AxiomC{$x \vdash T \to v$}
\RightLabel{\scriptsize(if $v = 0$)}
\UnaryInfC{$x \vdash \min(T) \to \true$}
\end{prooftree}
\end{minipage}

\begin{minipage}{.5 \textwidth}
\begin{prooftree}
\AxiomC{$x \vdash T \to v$}
\RightLabel{\scriptsize(if $v < |s| - 1$)}
\UnaryInfC{$x \vdash \max(T) \to \false$}
\end{prooftree}
\end{minipage}
\begin{minipage}{.5 \textwidth}
\begin{prooftree}
\AxiomC{$x \vdash T \to v$}
\RightLabel{\scriptsize(if $v = |s| - 1$)}
\UnaryInfC{$x \vdash \max(T) \to \true$}
\end{prooftree}
\end{minipage}
    
\begin{minipage}{.5 \textwidth}
\begin{prooftree}
\AxiomC{$x \vdash T \to v$}
\RightLabel{\scriptsize(if $s_v = \sigma$)}
\UnaryInfC{$x \vdash \sigma(T) \to \top$}
\end{prooftree}
\end{minipage}
\begin{minipage}{.5 \textwidth}
\begin{prooftree}
\AxiomC{$x \vdash T \to v$}
\RightLabel{\scriptsize(if $s_v \neq \sigma$)}
\UnaryInfC{$x \vdash \sigma(T) \to \false$}
\end{prooftree}
\end{minipage}

\begin{minipage}{1 \textwidth}
\begin{prooftree}
\AxiomC{$x \vdash T \to v$}
\AxiomC{$s,v \vdash T^\f \to w$}
\BinaryInfC{$x \vdash \f(T) \to w$}
\end{prooftree}
\end{minipage}

\begin{minipage}{1 \textwidth}
\begin{prooftree}
\AxiomC{$x \vdash T_0 \to \true$}
\AxiomC{$x \vdash T_1 \to v$}
\BinaryInfC{$x \vdash \ite{T_0}{T_1}{T_2} \to v$}
\end{prooftree}
\end{minipage}

\begin{minipage}{1 \textwidth}
\begin{prooftree}
\AxiomC{$x \vdash T_0 \to \false$}
\AxiomC{$x \vdash T_2 \to v$}
\BinaryInfC{$x \vdash \ite{T_0}{T_1}{T_2} \to v$}
\end{prooftree}
\end{minipage}
}
    \caption{Program semantics, $s$ and $\p$ omitted for legibility. For a recursive function name $\p$, $T^\f$ is its recursive definition in $\p$, and for an index $v$, $s_v$ is the character of $s$ at $v$.}
    \label{fig:semantics}
\end{figure}

Finally, we give our boolean monadic recursive schemes heads. Unlike the recursive programs of \cite{moschovakis19}, we allow for the possibility of multiheaded programs.

\begin{definition}
For a signature $\Sigma$, a $\Sigma$-boolean monadic recursive scheme is composed of a headless $\Sigma$-boolean monadic recursive scheme $p$ along with a finite nonempty collection of $p$-terms, called the \emph{heads}.
\end{definition}

Following \cite{moschovakis19}, given a $\Sigma$-BMRS, we call the underlying headless part its \emph{body}. We extend the usage of $s,x \vdash_p T \to v$ to include $p$ which have heads; in this case, remember that the meaning of $\vdash$ does not depend on them.

Finally, we define two important subclasses of programs.

\begin{definition}
A $\mathrm{BMRS}^\pred$, or \emph{predecessor} BMRS, is one in which the successor function $\suc$ does not occur. Similarly a $\mathrm{BMRS}^\suc$, or \emph{successor} BMRS, is one in which the predecessor function $\pred$ does not occur.
\end{definition}

\section{Interpretations}\label{sec:interpretations}

\begin{definition}
Let $\Sigma$ and $\Gamma$ be alphabets. An ($m$-fold) interpretation $\pi : \Sigma \times m \to \Gamma$ is a multiheaded $\Gamma$-BMRS $\pi$ with a head $\pi(\sigma,i)$ for each character $\sigma \in \Sigma$ and $i < m$.
\end{definition}

\begin{definition}
An interpretation $\pi : \Sigma \times m \to \Gamma$ is \emph{well-defined} in case for each string $s \in \Gamma^\star$, $i < m$, and $x < |s|$, $s,x \vdash \pi(\sigma,i) \to \true$ for at most one $\sigma \in \Sigma$ and $s,x \vdash \pi(\tau,i) \to \false$ for every other $\tau \in  \Sigma$.

It is additionally \emph{strict} if $s,x \vdash  \pi(\sigma,i) \to \true$ for exactly one $\sigma$.
\end{definition}

\begin{definition}\label{def:transduction}
If $\pi : \Sigma \times m \to \Gamma$ is a well-defined interpretation, then the \emph{transduction induced by $\pi$} is the function $ \sempar{\pi}: \Gamma^\star \to \Sigma^\star$ such that for every $s \in \Gamma^\star$, 
\begin{itemize}
    \item $|\sempar{\pi}(s)| = |J|$, where $J  = \{(q,r) \in |s| \times m: (\exists \sigma \in \Sigma)\, s,q \models \pi(\sigma,r) \to \true \}$, and
    
    \item for each $x < |J|$, $\sempar{\pi}(s) \models \sigma(x) \iff s,q \models \pi(\sigma,r) \to \true$, where $(q,r)$ is the unique element of $J$ with $x$ predecessors, where
    
    \item $|s| \times m$ is ordered lexicographically, with $|s|$ being the more significant and $m$ being the less significant coordinate.
\end{itemize}
\end{definition}
In the important special case that $\pi$ is strict, this specializes to:
\begin{itemize}
    \item $|\sempar{\pi}(s)| = m |s|$, for every $s \in \Gamma^\star$, and
    
    \item for each $x < m|s|$, $ \sempar{\pi}(s) \models \sigma(x) \iff  s,q \vdash \pi(\sigma,r) \to \true,$ where $r$ and $q$ are the remainder and quotient respectively of $x \div m$.
\end{itemize}


For example, let $\Gamma=\set{a}$, $\Sigma={7,8,9}$, and $m=3$.
The interpretation $\pi$ where $\pi(7,0)=\pi(8,1)=\pi(9,2)=a(\x)$ defines a function that takes a string $s$ of $|s|$ $a$ characters and returns a string of length $|s|\times 3$ of the form $(789)^{|s|}$.
To see how this obtains, the following table shows how each index in the string $aaaa$ is interpreted at each $m$-coordinate $i$: a $7$ at $m$-coordinate $0$, an $8$ at $m$-coordinate $1$, and a $9$ at $m$-coordinate $2$. 
\[
  \begin{array}[t]{lllll}
     & 0 & 1 & 2 & 3 \\
   i & a & a & a & a \\\hline
   0 & 7 & 7 & 7 & 7 \\
   1 & 8 & 8 & 8 & 8 \\
   2 & 9 & 9 & 9 & 9 \\
  \end{array}
\]
By collating the copies of the indices of $aaaa$ first by order of index and then by order of $i<3$, we obtain the string $789789789789$. 
Note that $\pi$ is strict.

The authors have previously shown \cite{bhaskaretal20} that order-preserving $\mathrm{BMRS}^\pred$ interpretations and order-preserving $\mathrm{BRMS}^\suc$ interpretations describe exactly the left-subsequential functions and right-subsequential functions, viz.,

\begin{theorem}
  \label{thm:subseq}
  For any well-defined order-preserving $\mathrm{BMRS}^\pred$ (resp. $\mathrm{BMRS}^\suc$) interpretation $\pi$, $\sempar{\pi}$ is a left-subsequential (resp. right-subsequential) function. 
  Likewise, for any left-subsequential (resp. right-subsequential) function $f$, $f=\sempar{\pi}$ for some order-preserving $\mathrm{BMRS}^\pred$ (resp. $\mathrm{BMRS}^\suc$) interpretation $\pi$. 
\end{theorem}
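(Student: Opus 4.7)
The plan is to establish both directions for $\mathrm{BMRS}^\pred$ interpretations versus left-subsequential functions; the $\mathrm{BMRS}^\suc$/right-subsequential statement then follows by the obvious symmetry under string reversal.

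For the forward direction, fix a well-defined order-preserving $\mathrm{BMRS}^\pred$ interpretation $\pi$ with body $p$ on recursive names $\f_0, \ldots, \f_k$. The key observation is that because $\suc$ is absent, any index-valued subterm evaluated at $\x = q$ produces some value $\le q$. Define the \emph{profile} at position $q$ of input $s$ to be the vector $(s, q \vdash_p \f_i(\x) \to v_i)_{i \le k} \in \{\true,\false\}^{k+1}$. I would prove by induction on the structure of the defining terms $T_i$ that the profile at $q$ is determined by the profile at $q - 1$ (with the appropriate base case at $q = 0$, where $\pred$ is clamped) together with the character $s_q$: every recursive call inside $T_i$ is either $\f_j(\x)$, which asks the profile at the current position and hence can be eliminated via a fixed-point argument on the finite set of profiles, or it is of the form $\f_j(\pred(\cdot))$, which asks the profile at $q-1$. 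Thus the profile evolves as a DFA reading $s$ left-to-right, and each head $\pi(\sigma, r)$ is evaluable as a function of the current profile and character. Assembling the $m$-block of heads as the transition output gives a left-subsequential transducer computing $\sempar{\pi}$; when no head fires at $(q, r)$, the interpretation simply skips that index, corresponding to emitting a shorter transition word.

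For the backward direction, given a left-subsequential transducer $T$ with state set $Q$, transition function $\delta$, initial state $q_0$, and final-output function $\lambda \colon Q \to \Sigma^\star$, I would introduce a recursive function $\f_q$ for each $q \in Q$ intended to satisfy $\f_q(x) \leftrightarrow$ ``$T$ is in state $q$ after consuming $s_0 \cdots s_x$.'' The body of $\f_q$ branches on $\min(\x)$ for the base case (where $q$ must be reachable from $q_0$ on $s_0$) and otherwise forms a nested if-then-else disjunction over all pairs $(q', \sigma)$ with $\delta(q', \sigma) = q$, of the conjunction $\sigma(\x) \wedge \f_{q'}(\pred(\x))$; each $\wedge$ and each $\vee$ is encoded by an $\mathtt{if}$. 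Choosing $m$ to be one more than the maximum of the per-transition output lengths and $|\lambda(q)|$, the head $\pi(\sigma, r)$ fires at $x$ precisely when the $r$-th letter of the block at position $x$ is $\sigma$; this block is the transition output entering the current state, concatenated at $\max(\x)$ with $\lambda$ of that state. Heads that do not fire yield skipped output indices, which order-preserving (non-strict) interpretations permit.

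The hard part, I expect, will be the inductive elimination of profile-self-reference in the forward direction: a term $T_i$ may contain arbitrary nestings of $\pred$, conditionals, and recursive calls, and one must argue that the only data about $s[0, q-1]$ the evaluation can ever access is the single profile vector at $q - 1$. This requires a careful case analysis on term structure together with a fixed-point computation within the current profile to resolve same-position self-references before passing to $\pred$. A secondary nuisance is the bookkeeping in the backward direction to bolt the final output $\lambda$ onto the last transition block at $\max(\x)$ without disrupting order-preservation; this is routine but does force $m$ to absorb both contributions simultaneously.
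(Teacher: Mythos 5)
First, a point of comparison: the paper does not prove Theorem~\ref{thm:subseq} at all --- it is imported verbatim from the authors' earlier work \cite{bhaskaretal20} --- so there is no in-paper argument to measure you against. Your strategy (profiles of recursive-call truth values as transducer states in the forward direction; one recursive function per transducer state in the backward direction) is the natural one and is almost certainly the shape of the cited proof. However, there are two concrete gaps. The main one is in the forward direction: you claim the profile at $q$ is determined by the profile at $q-1$ together with $s_q$, but a $\mathrm{BMRS}^\pred$ term may call $\max(\x)$ (and, after normalization, $\max$ of terms involving $\pred$), and whether $q$ is the last index is \emph{not} a function of the left context. So the profile does not literally ``evolve as a DFA reading $s$ left-to-right.'' This is precisely the point where the final-output function of a subsequential transducer has to enter the argument: you must either track two candidate profiles (under the hypotheses ``$q$ is last'' and ``$q$ is not last'') and discharge the right one via $\lambda$, or delay emission by one symbol. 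Your sketch never mentions $\max$, and as written the DFA claim is false. A secondary issue on the same front: since the semantics is a partial big-step relation, a call $\f_i(\x)$ at $q$ may diverge, so the profile lives in a three-valued (or partial) domain and your ``fixed-point argument on the finite set of profiles'' needs to be a least-fixed-point over that domain; well-definedness of $\pi$ only guarantees convergence of the heads, not of every $\f_i$.

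In the backward direction the construction is right, but your choice of $m$ is off: the block emitted at the last position is the transition output \emph{concatenated with} $\lambda$ of the final state, so $m$ must be at least the maximum transition-output length \emph{plus} $\max_q|\lambda(q)|$, not ``one more than the maximum of'' the two quantities. (You also silently ignore the empty input string, on which a subsequential transducer may emit $\lambda(q_0)$ but an interpretation, having no indices to copy, must emit the empty string; whatever convention \cite{bhaskaretal20} adopts there needs to be stated.) These are all repairable, and the overall architecture is sound, but the $\max$/$\min$ issue in the forward direction is a genuine missing idea rather than bookkeeping.
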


This paper solves one of the questions left open in \cite{bhaskaretal20}; that is, to characterize order-preserving BMRS interpretations in the presence of both $\pred$ and $\suc$. It turns out that we get this more or less for free from the solution of the syntactic composition problem.

Before moving on, we note that there is a natural notion of \emph{substitution} of an interpretation into a program.

\begin{definition}\label{def:substitution}
Suppose $p$ is a headless $\Sigma$-BMRS and $\pi : \Sigma \times m \to \Gamma$ is an interpretation. For each $i < m$, let $p^\pi_i$ be the $\Gamma$-BMRS obtained from $p$ by replacing any occurrence of $\sigma$ with $\pi(\sigma,i)$.
\end{definition}

As stated this definition is a bit imprecise. Since the BMRS language contains no \texttt{let-} or \texttt{where-} constructs that allow you to directly embed a program into another program, what we mean is this: take $p$, stick a copy of the body of $\pi$ underneath, and replace each occurrence of $\sigma$ by the head corresponding to $\pi(\sigma,i)$. We trust that this is sufficiently clear.\footnote{A  technical note: when we take the union of two headless programs, we possibly rename recursive function names so there are no accidental overlaps.} Note that every $p$-term is a $p^\pi_i$-term.

Substitutions enjoy the following property.

\begin{lemma}\label{lem:fundamental adjunction}
For any well-defined strict interpretation $\pi : \Sigma \times m \to \Gamma$, any string $s \in \Gamma^\star$, any headless $\Sigma$-BMRS $p$, any $p$-term $T$, any $x < m|s|$, and any boolean $b \in \{\true,\false\}$, 
$$ \sempar{\pi}(s),x \vdash_p T \to b \iff s,q \vdash_{p^\pi_r} T \to b,$$
where $q < |s|$ and $r < m$ are the quotient and remainder respectively of $x \div m$.
\end{lemma}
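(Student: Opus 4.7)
The natural strategy is structural induction on the $p$-term $T$ (equivalently, on the height of the derivation tree for $s,x \vdash_p T \to v$), first strengthening the inductive hypothesis to cover index-typed subterms as well as boolean-typed ones. The strengthening is necessary because essentially every compound boolean construct---$\sigma(T')$, $\max(T')$, $\min(T')$, and $\f(T')$---recurses through an index-typed subterm $T'$, so we need a companion claim asserting how index values transfer across the $x \leftrightarrow (q,r)$ correspondence.

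Base cases for $\x$, $\tvar$, and $\fvar$ are immediate by unwinding the corresponding clauses of Figure~\ref{fig:semantics}. For the compound boolean cases $\ite{T_0}{T_1}{T_2}$, $\max(T')$, and $\min(T')$, and the compound index cases $\suc(T')$ and $\pred(T')$, we unwind the relevant evaluation rule and appeal to the inductive hypothesis on the immediate subterm, relying on the fact that these output-side primitives and their input-side counterparts behave coherently under the quotient-remainder split. For a recursive call $\f(T')$, the inductive hypothesis first transfers the evaluation of $T'$, and a second appeal to the inductive hypothesis handles the evaluation of $\f$'s defining body $T^\f$; here we exploit the observation following Definition~\ref{def:substitution} that every $p$-term is also a $p^\pi_r$-term, so $T^\f$ is itself a legitimate target of the hypothesis.

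The crucial case is $\sigma(T')$ for a character $\sigma \in \Sigma$, and this is the only place strictness of $\pi$ is genuinely invoked. By the specialization spelled out immediately after Definition~\ref{def:transduction}, whenever $T'$ evaluates on the LHS to an index $v$, the character of $\sempar{\pi}(s)$ at $v$ is the unique $\sigma \in \Sigma$ with $s, q'' \vdash \pi(\sigma, r'') \to \true$, where $v = m q'' + r''$. On the RHS, the substitution in $p^\pi_r$ has replaced $\sigma(T')$ by the head $\pi(\sigma, r)$ applied to $T'$, whose value is computed by first evaluating $T'$ at $q$ in $s$ and then plugging the result into $\pi(\sigma, r)$. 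Combining these two observations with the inductive hypothesis applied to $T'$ yields the desired equivalence.

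The main obstacle I anticipate is not any one case but the careful formulation of the index-typed strengthening and the associated bookkeeping around the quotient-remainder correspondence: getting the precise matching condition so that each rule case becomes mechanical. Nothing is mathematically deep, but a slightly weak formulation of the strengthening would derail the routine induction and blur the role that strictness plays in the $\sigma(T')$ case.
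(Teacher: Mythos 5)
The paper states Lemma~\ref{lem:fundamental adjunction} without proof, so there is no official argument to compare yours against; I can only assess the sketch on its own terms, and it contains a genuine gap. The decisive issue is the step you describe as ``relying on the fact that these output-side primitives and their input-side counterparts behave coherently under the quotient-remainder split.'' For $m>1$ they do not. Under the correspondence $x = mq+r$, the successor on $\sempar{\pi}(s)$ sends $x$ to $x+1$, i.e., $(q,r)$ to $(q,r+1)$ (or to $(q+1,0)$ at a copy boundary), whereas on the right-hand side the copy index $r$ is frozen into the choice of program $p^{\pi}_{r}$ and $\suc$ moves $q$ to $q+1$. No strengthened index-typed hypothesis can reconcile these: for $T' \equiv \suc(\x)$ you would need $mq+r+1$ to agree with $m(q+1)+r$, which forces $m=1$. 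Concretely, take the paper's own example $\pi(7,0)=\pi(8,1)=\pi(9,2)=a(\x)$ with $m=3$, the term $T \equiv 7(\suc\,\x)$, the string $s=aaaa$ (so $\sempar{\pi}(s)=789789789789$), and $x=0$, hence $q=r=0$. The left side evaluates $7(\suc\,\x)$ at index $0$ of $789789789789$, finds an $8$ at index $1$, and yields $\false$; the right side evaluates the substituted term $a(\suc\,\x)$ at index $0$ of $aaaa$ and yields $\true$. So the equivalence you are inducting toward already fails at a one-step term.

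This means the obstacle you flagged as mere bookkeeping is an actual obstruction: the routine induction goes through only when $m=1$, or when every primitive and recursive call in $T$ is applied directly to $\x$ so that no index shifting occurs. Your handling of the $\sigma(T')$ case and the role you assign to strictness there are fine in spirit, but no formulation of the index-typed strengthening will rescue the $\suc$ and $\pred$ cases as the substitution $p^{\pi}_{r}$ is currently defined. A correct proof needs a different substitution: maintain an $m$-indexed family of copies of each recursive function of $p$, and reroute a call $\f(\suc\,\x)$ occurring in the copy-$r$ component to the copy-$(r+1)$ variant of $\f$ at the \emph{same} input index when $r<m-1$, and to the copy-$0$ variant at the next input index when $r=m-1$ (dually for $\pred$), mirroring the tiling picture used for the heads of $\otimes$. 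The lemma statement and the induction then have to be $r$-indexed families rather than a single equivalence; that rerouting is the missing idea in your proposal.
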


\section{Syntactic composition}
\label{s:composition}

We now turn to the following question: given two ``composable'' interpretations $\pi : \Sigma \times m \to \Gamma$ and $\rho : \Delta \times n \to \Sigma$, how do we define an interpretation $\omega : \Delta \times mn \to \Gamma$ such that $\sempar{\rho} \circ \sempar{\pi} = \sempar{\omega}$?
This is an instance of a general problem in the theory of programming languages, namely: how do we pull back a given semantic operation on functions to a corresponding syntactic operation on the program texts themselves?\footnote{Cf. \cite{Jones91} where such operations are called \emph{symbolic} rather than \emph{syntactic}.} 

In most programming languages, the fact that programs \emph{can} be composed is not in and of itself hard to show; the subtlety (if there is one) usually lies in how efficient the composition can be made. However, this is not the case for order-preserving BMRS interpretations. For example, suppose we wanted to compose the interpretations $\rho$ and $\pi$ above. Loosely speaking, $\rho$ defines each function $\delta(x)$ (``index $x$ carries character $\delta$''), for $\delta \in \Delta$, using $\sigma(x)$ as primitives, for $\sigma \in \Sigma$. Similarly $\pi$ defines each function $\sigma(x)$, for $\sigma \in \Sigma$, using $\gamma(x)$ as primitives, for $\gamma \in \Gamma$.

So one is tempted to define $\omega$ by taking $\rho$ and replacing each call to $\sigma(x)$ by its definition in $\pi$. That is indeed the right idea. The problem is that order-preserving interpretations are not \emph{quite} compositional: they use the successor and predecessor functions on the input strings, but they are not required to define them on their output string.

This is not a huge problem for strict interpretations, in which index successors and predecessors easily carry over from input to output strings. (For example, in a strict 1-fold interpretation we can simply identify the indices in the input and output strings.)
Life gets more complicated, however, for non-strict interpretations, which may \emph{skip} certain indices in the output string. Naively, it would seem that to relate indices of the input and output, a program would have to \emph{count} skipped indices, an impossibility in languages with boolean-valued recursive functions. Overcoming this obstacle is the core technical contributions of this paper.

In this section, we factor the problem of composing functions into several sub-problems. The syntactic versions of all but one of these sub-problems are routine, and we dispatch them in the present section. The final sub-problem cuts to the heart of the matter, so we postpone it to its own section.

\subsection{Four sub-problems}
Say that a \emph{sharply bounded function} is a string function $f$ such that there exists a natural number $n$ satisfying $|f(x)| \le n|x|$, for each $x$ in the domain of $f$. In this case, we say $f$ is \emph{sharply bounded} by $n$. Of course, if $\pi$ is an $n$-fold interpretation, then $\sempar{\pi}$ is sharply bounded by $n$. By analogy to interpretations, say that a sharply bounded function $f$ is \emph{strict} if there exists an $n$ such that $|f(x)| = n|x|$ for each $x$ in the domain of $f$.

The problem we want to consider is: given two sharply bounded functions $f : \Gamma^\star \to \Sigma^\star$ and $g : \Sigma^\star \to \Delta^\star$ bounded by $n$ and $m$ respectively, how can we compose them to obtain $g \circ f : \Gamma^\star \to \Delta^\star$ bounded by $nm$? Let's say we know how to compose \emph{strict} sharply bounded functions, and we want to reduce the general problem to the strict case in the simplest way possible.

Suppose $\square$ is a character that appears neither in $\Sigma$, $\Delta$, nor $\Gamma$. (We shall think of it as a ``blank'' character.) Let $\Sigma_\square$, $\Delta_\square$, and $\Gamma_\square$ be obtained from $\Sigma$, $\Delta$, and $\Gamma$ respectively by adding this blank character. Then there is a natural \emph{deletion} map, e.g., $d_\Sigma : \Sigma^\star_\square \to \Sigma^\star$ that deletes all blank characters from a given string. (So, e.g., $d(ab\square a \square \square) = aba$) We shall abuse notation and just write $d$ for $d_\Sigma$, $d_\Gamma$, etc.

Say that $f' : \Gamma^\star \to \Sigma^\star_\square$ is a \emph{strictification} of $f : \Gamma^\star \to \Sigma^\star$ in case both functions are sharply bounded by the same bound, $f'$ is strict, and $d \circ f' = f$. Any sharply bounded function admits a (non-unique) strictification by arbitrarily padding each $f(x)$ with blanks until its length is equal to $n |x|$. We might hope that we can obtain $g \circ f$ from sharply bounded functions $f$ and $g$ by

\begin{itemize}
    \item finding two strictifications $f'$ and $g'$ of $f$ and $g$, and
    \item composing them (i.e., $g' \circ f'$).
\end{itemize}

This has two problems. Less seriously, the codomain of $g'$ is $\Delta_\square^\star$ and not $\Delta^\star$. We fix this by simply composing on the outside by $d : \Delta^\star_\square \to \Delta^\star$, an easy operation on program codes. More seriously, $f'$ and $g'$ are not composable, as the codomain of $f'$ and the domain of $g'$ are not identical ($f' : \Gamma^\star \to \Sigma^\star_\square$ while $g' : \Sigma^\star \to \Delta^\star_\square$). Naively we might try to fix this by replacing $f'$ by $d \circ f'$, until we realized that we are now tasked with composing the (non-strict) $d \circ f'$ with $g$, placing us right back where we started.

Instead, we convert $g' : \Sigma^\star \to \Delta_\square^\star$ into another  strict function $g_b : \Sigma^\star_\square \to \Delta^\star_\square$ such that $d \circ g_b  = d \circ g' \circ d$, i.e., the following diagram commutes:
\[\begin{tikzcd}
	{\Sigma^\star_\square} & {} & {\Delta^\star_\square} \\
	&&& {\Delta^\star} \\
	{\Sigma^\star} & {} & {\Delta^\star_\square }
	\arrow["d"', from=1-1, to=3-1]
	\arrow["d"', from=3-3, to=2-4]
	\arrow["{g'}"{description}, from=3-1, to=3-3]
	\arrow["{g_b}"{description}, from=1-1, to=1-3]
	\arrow["d", from=1-3, to=2-4]
\end{tikzcd}\]
In which case,
$$ d \circ g_b \circ f' = d \circ g' \circ d \circ f' = g \circ f$$
which is what we wanted to compute in the first place. Diagrammatically,
\[\begin{tikzcd}
	&& {\Sigma^\star_\square} & {} & {\Delta^\star_\square} \\
	&&&& {\Delta^\star_\square } && {\Delta^\star} \\
	{\Gamma^\star} && {\Sigma^\star} & {}
	\arrow["g", curve={height=24pt}, from=3-3, to=2-7]
	\arrow["d"', from=1-3, to=3-3]
	\arrow["d"', from=2-5, to=2-7]
	\arrow["{g'}"{description}, from=3-3, to=2-5]
	\arrow["f"{description}, from=3-1, to=3-3]
	\arrow["{f'}"{description}, from=3-1, to=1-3]
	\arrow["{g_b}"{description}, from=1-3, to=1-5]
	\arrow["d", curve={height=-6pt}, from=1-5, to=2-7]
\end{tikzcd}\]
Therefore, we have factored the problem of composing $g$ with $f$ into the following subtasks:
\begin{enumerate}
    \item Strictifying $f$ into $f'$ and $g$ into $g'$.
    \item Lifting $g'$ to $g_b$, which we call \emph{blank-enrichment}.
    \item Composing strict functions $g_b$ and $f'$.
    \item De-strictifying $(g_b \circ f')$ into $d \circ (g_b \circ f')$.
\end{enumerate}
Of these, strictifying and de-strictifying are the simplest to realize as operations on program codes, and we deal with them first. Composition of strict interpretations is straightforward, thought it takes some care to state and verify cleanly. Blank-enrichment is the most complicated, requiring both new ideas and attention to technical detail, and it is this which we postpone to the end.

\subsection{Strictification}

This is perhaps the easiest transformation on program codes. Given an interpretation $\pi : \Sigma \times m \to \Gamma$, we define $\pi' : \Sigma_\square \times m \to \Gamma$ by simply saying that an index of the input string carries the blank character $\square$ in $\pi'$ when that index carries no character in $\pi$. This has the effect that $\sempar{\pi'}$ is obtained by stuffing $\square$ where $\sempar{\pi}$ had nothing, effectively strictifying it. More precisely:

\begin{definition}
Given $\pi : \Sigma \times m \to \Gamma$, let $\pi' : \Sigma_\square \times m \to \Gamma$ have the same body as $\pi$. Define its heads by:
\begin{itemize}
    \item $\pi'(\sigma,i) \equiv \pi(\sigma,i)$ for each $i$ and $\sigma \neq \square$, and
    
    \item $\pi'(\square,i) \equiv \bigwedge_{(\sigma \neq \square)} \neg \pi(\sigma,i)$.\footnote{This should be regarded as syntactic sugar for a list of if-then-else statements.}
\end{itemize}
\end{definition}

Then we have:

\begin{lemma}\label{lem:strictification correctness}
If $\pi$ is well-defined, then $\pi'$ is well-defined and strict; moreover, $d \circ \sempar{\pi'} = \sempar{\pi}$.
\end{lemma}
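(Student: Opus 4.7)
The plan is to establish the two conjuncts---that $\pi'$ is well-defined and strict, and that $d \circ \sempar{\pi'} = \sempar{\pi}$---by unpacking Definition~\ref{def:transduction} pointwise. Because $\pi$ and $\pi'$ share a body, and the new $\square$-head is just a propositional combination of existing heads, there is no new recursion to reanalyze; the entire argument reduces to a case split on the heads of $\pi$ at each coordinate of the input.

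First I would fix $s \in \Gamma^\star$, $q < |s|$, and $r < m$, and split on which of the two scenarios permitted by well-definedness of $\pi$ actually occurs. Either there is a unique $\sigma \in \Sigma$ with $s, q \vdash \pi(\sigma, r) \to \true$ while every other $\pi(\tau, r)$ converges to $\false$; or every $\pi(\sigma, r)$ converges to $\false$. In the first case, the defining conjunction of $\pi'(\square, r)$ contains the negated true literal $\neg \pi(\sigma, r)$, so it evaluates to $\false$, while the non-blank heads of $\pi'$ inherit their truth values from $\pi$, yielding a unique true head at $\sigma$. In the second case, all non-blank heads of $\pi'$ evaluate to $\false$, and $\pi'(\square, r)$ is a conjunction of true literals and so evaluates to $\true$. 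Either way exactly one head of $\pi'$ fires at $(q, r)$, which is simultaneously the well-definedness and strictness condition.

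For the semantic equation, let $J = \{(q, r) \in |s| \times m : (\exists \sigma \in \Sigma)\, s, q \vdash \pi(\sigma, r) \to \true\}$. The case split above shows that $\sempar{\pi'}(s)$, indexed by $|s| \times m$ in lex order, agrees with $\sempar{\pi}(s)$ on coordinates in $J$ and carries $\square$ at coordinates outside $J$. The map $d$ then deletes exactly the $\square$-positions, so $d(\sempar{\pi'}(s))$ is the subsequence indexed by $J$ in lex order, which by Definition~\ref{def:transduction} is precisely $\sempar{\pi}(s)$. I do not foresee a genuine obstacle here: the proof is essentially bookkeeping. The one subtlety worth flagging---that well-definedness of $\pi$ implicitly guarantees convergence of every $\pi(\tau, r)$, so that the big conjunction defining $\pi'(\square, r)$ never diverges---is immediate from the wording of well-definedness, which demands that every non-maximal head actually reduce to $\false$.
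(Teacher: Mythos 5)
Your proposal is correct and follows essentially the same route as the paper's proof: the same case split at each $(q,r)$ on whether some head of $\pi$ fires (yielding well-definedness and strictness in one stroke), followed by the identification of the indices of $\sempar{\pi}(s)$ with the subset $J \subseteq |s| \times m$ in lexicographic order to get $d \circ \sempar{\pi'} = \sempar{\pi}$. The only cosmetic difference is that you phrase the second half as ``$\sempar{\pi'}(s)$ agrees with $\sempar{\pi}(s)$ on $J$ and is blank off $J$, so $d$ recovers $\sempar{\pi}(s)$,'' whereas the paper tracks the single index $mq+r$ and counts its non-blank predecessors; these are the same bookkeeping.
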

\begin{proof}
Fix $s \in \Gamma^\star$, $i \in m$, and $x < |s|$. If $s,x \vdash \pi(\sigma,i) \to \true$ for some $\sigma \in \Sigma$ then $s,x \vdash \pi'(\sigma,i) \to \true$ for just that $\sigma$, and $s,x \vdash \pi'(\square,i) \to \false$. If $s,x \vdash \pi(\sigma,i) \to \true$ for no $\sigma \in \Sigma$, then the same is true of $\pi'$; moreover, $s,x \vdash \pi'(\square,i) \to \true$. This shows that $\pi'$ is well-defined and strict.

To show that $d \circ \sempar{\pi'} = \sempar{\pi}$, it suffices to show that for every $\Gamma$-string $s$ and index $x$ of $\sempar{\pi}(s)$, if $\sempar{\pi}(s)$ carries $\sigma$ at index $x$, then $\sempar{\pi'}(s)$ carries $\sigma$ at index $y$, where $y$ is the index of $\sempar{\pi'}(s)$ with $x$ non-blank predecessors.

Fix a string $s \in \Gamma^\star$. The indices of $\sempar{\pi}(s)$ can be identified with the set $J$ of pairs $(q,r)\in |s|\times m$ such that $s,q \models \pi(\sigma,r) \to \true$ for some $\sigma \in \Sigma$. Since $\pi'$ is strict, the indices of $\sempar{\pi}(s)$ can be identified simply with the set $|s| \times m$. Order this set lexicographically, first on $q$ then on $r$, and let $J$ inherit the induced order as a subset. 

Now fix $x < |\sempar{\pi}(s)|$. There is a unique $\sigma \in \Sigma$ such that $\sempar{\pi}(s) \models \sigma(x)$. Let $(q,r)$ be the unique element of $J$ with $x$ predecessors. Then $s,q \models \pi(\sigma,r) \to \true$, so $s,q \models \pi'(\sigma,r) \to \true$, which says that $\sempar{\pi'}(s) \models \sigma(mq + r)$. But $mq + r$ is exactly the index of $\sempar{\pi'}(s)$ with $x$ non-blank predecessors, which is exactly what we wanted to show.
\end{proof}

\subsection{De-strictification}

Suppose we have a well-defined, strict interpretation $\pi : \Sigma_\square \times m \to \Gamma$, and we want to obtain a well-defined, non-strict interpretation $\pi^\dag : \Sigma \times m \to \Gamma$ such that $\sempar{\pi^\dag} = d \circ \sempar{\pi}$. This transformation is extremely simple: we simply take $\pi$ and ``forget'' each $\pi(\square,i)$.

\begin{definition}
Given $\pi$ as above, let $\pi^\dag$ have the same body. For each $i<m$ and character $\sigma \in \Sigma$,  let
$$ \pi^\dag(\sigma,i) \equiv \pi(\sigma,i) .$$
\end{definition}

Then we have:
\begin{lemma}\label{lem:destrictification correctness}
If $\pi$ is a well-defined strict interpretation, then $\pi^\dag$ is well-defined; moreover, $\sempar{\pi^\dag} = d \circ \sempar{\pi}$.
\end{lemma}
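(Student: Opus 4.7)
The proof will parallel the strictification lemma but runs in the reverse direction and is in fact simpler, because $\pi^\dag$ just ``forgets'' the $\square$ head of $\pi$ and leaves everything else alone. The plan is to verify well-definedness in one paragraph by a direct unfolding of the semantics, and then verify the semantic equation $\sempar{\pi^\dag} = d \circ \sempar{\pi}$ by matching the index set of $\sempar{\pi^\dag}(s)$ with the set of non-blank positions of $\sempar{\pi}(s)$.

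First I would fix $s \in \Gamma^\star$, $i < m$, and $x < |s|$, and argue well-definedness. Because $\pi$ is strict and well-defined, there is exactly one $\sigma \in \Sigma_\square$ with $s,x \vdash \pi(\sigma,i) \to \true$; since $\pi^\dag(\sigma,i) \equiv \pi(\sigma,i)$ for every $\sigma \in \Sigma$, it follows that at most one $\sigma \in \Sigma$ (namely that unique $\sigma$, if it happens not to be $\square$) has $s,x \vdash \pi^\dag(\sigma,i) \to \true$, and all other characters in $\Sigma$ evaluate to $\false$. This is exactly the well-definedness condition for $\pi^\dag$ (which may be non-strict precisely because the unique witnessing character for $\pi$ could be $\square$).

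Next I would establish $\sempar{\pi^\dag} = d \circ \sempar{\pi}$. Fix $s \in \Gamma^\star$. Because $\pi$ is strict, I can identify the indices of $\sempar{\pi}(s)$ with $|s| \times m$ ordered lexicographically; the character at $(q,r)$ is the unique $\sigma \in \Sigma_\square$ with $s,q \vdash \pi(\sigma,r) \to \true$. The set $J \subseteq |s| \times m$ appearing in Definition~\ref{def:transduction} for $\pi^\dag$ consists of exactly those pairs $(q,r)$ where some $\sigma \in \Sigma$ (i.e., not $\square$) witnesses truth of $\pi^\dag(\sigma,r) = \pi(\sigma,r)$, which is precisely the set of non-blank positions of $\sempar{\pi}(s)$. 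Thus $|\sempar{\pi^\dag}(s)| = |J|$ agrees with $|d(\sempar{\pi}(s))|$, and at the $x$-th position of each string, both read off the character carried by the $x$-th non-blank position of $\sempar{\pi}(s)$ (by the second clause of Definition~\ref{def:transduction} together with the fact that $d$ removes exactly the blank positions, preserving the lexicographic order on the rest).

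No step here looks genuinely hard: well-definedness is immediate from strictness, and the semantic equality is a bookkeeping exercise reusing the lexicographic index identification already exploited in the proof of Lemma~\ref{lem:strictification correctness}. The only point that requires any care is making sure the identification of $J$ with the non-blank positions of $\sempar{\pi}(s)$ respects the induced order, which is automatic because both inherit the lexicographic order from $|s| \times m$.
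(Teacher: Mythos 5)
Your proposal is correct and follows essentially the same route as the paper: both arguments identify the index set $J$ of $\sempar{\pi^\dag}(s)$ with the non-blank positions of $\sempar{\pi}(s)$ under the lexicographic order on $|s|\times m$ and match characters position by position. The only cosmetic difference is that you invoke strictness for well-definedness, whereas the paper observes that well-definedness of $\pi$ alone already suffices.
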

\begin{proof}
To show that $\pi^\dag$ is well-defined we don't even need that $\pi$ is strict; $\pi^\dag$ trivially inherits well-definedness from $\pi$. 

As in the proof of Lemma \ref{lem:strictification correctness}: to show that $d \circ \sempar{\pi} = \sempar{\pi^\dag}$, it suffices to show that for every $\Gamma$-string $s$ and index $x$ of $\sempar{\pi^\dag}(s)$, if $\sempar{\pi^\dag}(s)$ carries $\sigma$ at index $x$, then $\sempar{\pi}(s)$ carries $\sigma$ at index $y$, where $y$ is the index of $\sempar{\pi}(s)$ with $x$ non-blank predecessors. As before, let $J$ be those pairs $(q,r)$ in $|s| \times m$ such that $s,q \vdash \pi(\sigma,r) \to \true$ for some $\sigma \in \Sigma$.

Fix $s \in \Gamma^\star$ and an index $x$ of $\sempar{\pi^\dag}(s)$. Let $\sigma$ be the character of $\sempar{\pi^\dag}(s)$ at $x$, so that $s,q \vdash \pi(\sigma,r) \to \true$, where $(q,r)$ is the unique element of $J$ with $x$ predecessors. But then $mq + r$ is the index of $\sempar{\pi}(s)$ with $x$ non-blank predecessors, which is what we wanted to show.
\end{proof}

\subsection{Strict composition}

Suppose that we have two strict interpretations $\pi : \Sigma \times m \to \Gamma$ and $\rho : \Delta \times n \to \Sigma$. How can we construct a strict interpretation $\mu : \Delta \times mn \to \Gamma$ such that $\sempar{\mu} = \sempar{\rho} \circ \sempar{\pi}$?

Consider the following example for motivation. Suppose that $\Gamma = \{0,1\}$, $\Sigma = \{a,b,c\}$, $m = 2$, $\Delta = \{8,9\}$, and $n = 3$. Suppose that $\sempar{\pi}(010)=abbcaa$, and $\sempar{\rho}(abbcaa) = 988989998 998 998 899$ i.e.,

\begin{minipage}{0.3\textwidth}
\centering
\begin{tabular}{c|c|c|c}
     & 0 & 1 & 0  \\ \hline
     $c_0$ & $a$ & $b$ & $a$  \\ \hline
    $c_1$ & $b$ & $c$ & $a$ 
\end{tabular}
\end{minipage}
\begin{minipage}{0.7\textwidth}
\centering
\begin{tabular}{c|c|c|c|c|c|c}
     & $a$ & $b$ & $b$ & $c$ & $a$ & $a$  \\ \hline
    $c_0$ & 9 & 9 & 9 & 9 & 9 & 8  \\ \hline
    $c_1$ & 8 & 8 & 9 & 9 & 9 & 9 \\ \hline 
    $c_2$ & 8 & 9 & 8 & 8 & 8 & 9
\end{tabular}
\end{minipage}

Then we want $\sempar{\mu}(010) = 988989 998998 998899$, i.e.,

\begin{center}
    \begin{tabular}{c|c|c|c}
     & 0 & 1 & 0  \\ \hline
    $c_0$ & 9 & 9 & 9  \\ \hline 
    $c_1$ & 8 & 9 & 9 \\ \hline
    $c_2$ & 8 & 8 & 8  \\ \hline
    $c_3$ & 9 & 9 & 8 \\ \hline
    $c_4$ & 8 & 9 & 9  \\ \hline
    $c_5$ & 9 & 8 & 9 
\end{tabular}
\end{center}

We can form this new table from the two old ones like so:
\begin{center}
    \begin{tabular}{c||c||c||c}
     & 0 & 1 & 0  \\ \hline \hline
    $c_0$ & 9 & 9 & 9  \\ \hline 
    $c_1$ & 8 & 9 & 9 \\ \hline 
    $c_2$ & 8 & 8 & 8  \\ \hline \hline
    $c_3$ & 9 & 9 & 8 \\ \hline
    $c_4$ & 8 & 9 & 9  \\ \hline
    $c_5$ & 9 & 8 & 9 
\end{tabular}
\end{center}
Each double-edge box is a ``tile.'' We take the first table and replace each $a$, $b$, and $c$ by the appropriate tile obtained by the second table. Out of the 6 copies $c_0,\dots,c_5$ in the composed table, the quotient upon division by $3$ tells us which copy ($c_0$ or $c_1$) to look at in the first table, and the remainder ($c_0$, $c_1$, or $c_2$) tells us which copy to look up in the second table. This suggests the following definition of composition.

\begin{definition}
Given an interpretation $\pi : \Sigma \times m \to \Gamma$ and $\rho : \Delta \times n \to \Sigma$, define an interpretation $(\rho \otimes \pi) : \Delta \times mn \to \Gamma$ as follows:
\begin{itemize}
    \item The body of $\rho \otimes \pi$ is the union of the bodies of $\rho^\pi_q$, for each $q \in m$.\footnote{Another technical note: each $\rho^\pi_q$ contains a copy of the body of $\pi$. These can be identified in the union. Otherwise, the recursive function symbols in each component must be renamed so they are not identified in the union.}
    
    \item For every $i \in mn$, let $q$ and $r$ be the integer quotient and remainder respectively of $i \div n$. (Then $r \in n$ and $q \in m$.) For every $(\delta,i) \in \Delta \times mn$, let the head $(\rho \otimes \delta)(\delta,i)$ be $\rho(\delta,r)^\pi_q$, i.e., the term obtained from $\rho(\delta,r)$ by replacing each occurrence of any $\sigma$ by $\pi(\sigma,q)$.
\end{itemize}

\end{definition}

First observe that this transformation preserves strictness: if $\pi$ and $\rho$ are well-defined and strict, then so is $\rho \otimes \pi$.
Next, we prove correctness:

\begin{lemma}\label{lem:strict comp correctness}
$\sempar{\rho \otimes \pi} = \sempar{\rho} \circ \sempar{\pi}$.
\end{lemma}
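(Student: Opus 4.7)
The plan is to reduce everything to a single application of Lemma \ref{lem:fundamental adjunction}, after matching up the index arithmetic on the two sides of the equation.

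First I would check that $\rho \otimes \pi$ is strict (inherited trivially from strictness of $\rho$ and $\pi$: at each fixed coordinate $rn+R$ the head $\rho(\delta,R)^\pi_r$ returns $\true$ for exactly one $\delta$), so that both $\sempar{\rho \otimes \pi}(s)$ and $\sempar{\rho}(\sempar{\pi}(s))$ have length $mn|s|$. It therefore suffices, for a fixed $s \in \Gamma^\star$, $\delta \in \Delta$, and $y < mn|s|$, to show
$$ \sempar{\rho \otimes \pi}(s) \models \delta(y) \iff \sempar{\rho}(\sempar{\pi}(s)) \models \delta(y). $$

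The key index-arithmetic observation is that the unique decomposition $y = q \cdot mn + r \cdot n + R$ with $q < |s|$, $r < m$, and $R < n$ can be read in two equivalent ways: as the pair $(q, rn+R)$, namely the quotient and remainder of $y \div mn$ (which governs the $mn$-fold transduction $\rho \otimes \pi$); or as the pair $(qm+r, R)$, the quotient and remainder of $y \div n$ (which governs the $n$-fold transduction $\rho$ applied to the $m|s|$-long string $\sempar{\pi}(s)$). Applying the strict clause of Definition \ref{def:transduction} to each side yields
$$ \sempar{\rho \otimes \pi}(s) \models \delta(y) \iff s, q \vdash (\rho \otimes \pi)(\delta, rn+R) \to \true, $$
$$ \sempar{\rho}(\sempar{\pi}(s)) \models \delta(y) \iff \sempar{\pi}(s), qm+r \vdash \rho(\delta, R) \to \true. $$

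Next I would unfold the definition of $\rho \otimes \pi$: its head at coordinate $rn+R$ is precisely $\rho(\delta,R)^\pi_r$, and thanks to the disjoint renaming of recursive symbols across the different $\rho^\pi_{q'}$ in the combined body, only the sub-body $\rho^\pi_r$ is reachable during this derivation. Hence the first equivalence becomes $s, q \vdash_{\rho^\pi_r} \rho(\delta,R) \to \true$. Finally, Lemma \ref{lem:fundamental adjunction}, applied to the strict interpretation $\pi$, the string $s$, the body of $\rho$, the $\rho$-term $\rho(\delta,R)$, and the index $x = qm+r$, converts the right-hand side of the second equivalence into this same statement, closing the chain.

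The main risk of error is pure bookkeeping: aligning the two decompositions of $y$, and keeping the renamed recursive symbols of $\rho \otimes \pi$'s body straight so that evaluation inside the union really does collapse to evaluation inside the single relevant summand. I expect no genuine obstacle, because the substantive content---relating evaluations on $\sempar{\pi}(s)$ to evaluations on $s$ with syntactic substitution---has already been packaged in Lemma \ref{lem:fundamental adjunction}.
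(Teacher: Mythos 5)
Your proposal is correct and follows essentially the same route as the paper's proof: both sides reduce via the strict clause of Definition \ref{def:transduction} to evaluations of heads, the two decompositions of the index ($y = q\cdot mn + (rn+R)$ versus $y = (qm+r)\cdot n + R$) are aligned, and Lemma \ref{lem:fundamental adjunction} closes the chain. The only cosmetic differences are your variable names and your explicit remarks on strictness of $\rho\otimes\pi$ and on the disjointness of renamed recursive symbols, which the paper handles just before the lemma and in a footnote, respectively.
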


\begin{proof}
We know that the strings $\sempar{\rho \otimes \pi}(s)$ and $\sempar{\rho} ( \sempar{\pi}(s))$ each have length $mn |s|$. For each $x < mn|s|$ and $\delta \in \Delta$, we must show that
$$ \sempar{\rho \otimes \pi}(s) \models \delta(x) \iff \sempar{\rho}(\sempar{\pi}(s)) \models \delta(x).$$
Let $q < |s|$ and $r < mn$ be the quotient and remainder respectively of $x \div mn$. Then $x = mn \cdot q + r$. Let $q' < m$ and $r' < n$ be the quotient and remainder respectively of $r$ upon division by $n$. Then $r = n \cdot q' + r'$, and $x = n \cdot (m \cdot q + q') + r'$. Hence $r'$ is also the remainder of $x \div n$, and the quotient $q^\dagger$ of $x \div n$ is $m \cdot q + q'$. Observe that the quotient and remainder of $q^\dagger \div m$ are $q$ and $q'$ respectively.

By definition of $\sempar{\rho \otimes \pi}$, 
$$\sempar{\rho \otimes \pi}(s) \models \delta(x) \iff s,q \vdash (\rho \otimes \pi)(\delta,r) \to \true.$$
By definition of $\otimes$, $(\rho \otimes \pi)(\delta,r)$ is $ \rho(\delta,r')^\pi_{q'}$, so
$$\sempar{\rho \otimes \pi}(s) \models \delta(x) \iff s,q \vdash \rho(\delta,r')^\pi_{q'} \to \true.$$

On the other hand, 
$$\sempar{\rho}(\sempar{\pi}(s)) \models \delta(x) \iff \sempar{\pi}(s),q^\dag \vdash \rho(\delta,r') \to \true,$$
by definition of $\sempar{\rho}$. By Lemma \ref{lem:fundamental adjunction}, $$\sempar{\pi}(s),q^\dag \vdash \rho(\delta,r') \to \true \iff s,q \vdash \rho(\delta,r')^\pi_{q'} \to \true.$$

Hence, by composing these equivalences,
$$ \sempar{\rho \otimes \pi}(s) \models \delta(x) \iff \sempar{\rho}(\sempar{\pi}(s)) \models \delta(x), $$
which is what we wanted to show.
\end{proof}

\section{Blank enrichment}
\label{s:blank}

We now tackle the following problem: given a strict interpretation $\pi : \Sigma_\square \times m \to \Gamma$, find a strict interpretation $\pi_b : \Sigma_\square \times m \to \Gamma_\square$ such that $d \circ \sempar{\pi_b} = d \circ \sempar{\pi} \circ d$. The following terminology is helpful in understanding how this works. For a string $s \in \Gamma_\square^\star$, call $d(s)$ its \emph{underlying} $\Gamma^\star$-string. Conversely, call any member of the $d$-preimage of a $\Gamma^\star$-string $s$ a \emph{padding} of $s$. Call two strings $s,t \in \Gamma_\square^\star$ \emph{siblings} if $d(s) = d(t)$, i.e., they have the same underlying string.

Now what we want is to construct $\pi_b$ such that for any string $s \in \Gamma_\square^\star$, $\sempar{\pi_b}(s)$ and $\sempar{\pi}(t)$ are siblings, where $t$ is the underlying string of $s$. Of course, there are many ways to do this. We do so in the most straightforward way possible. For example, if $m=1$, we simply reproduce $\square$ in $\sempar{\pi_b}(s)$ wherever $s$ has a blank, then fill in $\sempar{\pi}(t)$ in the remaining spaces. For example, let $\Gamma = \{0,1\}$, $\Sigma_\square = \{a,b,\square\}$. Then we would want to define $\pi_b$ such that:
\begin{itemize}
    \item if $\sempar{\pi}(00) = ab$, $\sempar{\pi_b}(00 \square)  = ab \square$;
    
    \item if $\sempar{\pi}(00) = ab$, $\sempar{\pi_b}(0 \square0)  = a\square b$;

    \item if $\sempar{\pi}(01) = b \square$, $\sempar{\pi_b}(0 \square 1) = b \square \square$,
    
    \item if $\sempar{\pi}(01) = b \square$, $\sempar{\pi_b}( \square 0 1) = \square b \square$,

    \item if $\sempar{\pi}(010011) = aabbab$, $\sempar{\pi_b}(01 \square 00 \square 11) = aa \square bb \square ab$;
    
    \item if $\sempar{\pi}(010011) = aabbab$, $\sempar{\pi_b}(01 0 \square 0 1 \square \square 1) = aab \square ba \square \square b$.
\end{itemize}
Notice that in all of these instances $\sempar{\pi_b}(s)$ and $\sempar{\pi}(t)$ are siblings, where $t$ is the underlying string of $s$.

When $m = 2$, we do the same thing, but reproduce 2 copies of $\square$ in the output for each $\square$ in the input. For example, if $\sempar{\pi}(010) = aa\square bab$, we would like $\sempar{\pi_b}( \square 01 \square 0) = \square \square aa \square b \square \square ab$. The $\square \square$ pattern in indices $(0,1)$ and $(6,7)$ in the output come from the $\square$'s at indices $0$ and $3$ in the input. The rest comes from filling in $\sempar{\pi}(010)$ in the remaining indices.

Why is realizing this transformation on program codes hard? Again, briefly assume that $m=1$.\footnote{All the technical difficulty is encapsulated in the case $m=1$, but it is less cumbersome to discuss.} The basic idea is that to figure out which character is carried by index $x$ in $\sempar{\pi_b}(s)$, we have to figure out the number $x^\star$ of \emph{non-blank predecessors} of $x$ (that is, the number of indices of $\sempar{\pi_b}(s)$ preceding $x$ which do not carry $\square$), and look up index $x^\star$ in the string $\sempar{\pi}(s)$.

This sounds easy: given $x$, all we have to do is compute $x^\star$ and use $\pi$ to find the character it carries. But the boolean monadic restriction prevents us from computing an index from another index as a subroutine! Even worse, it seems impossible to count the number of blank indices before a certain point.

What we have to do instead is modify $\pi$ so that it ``ignores'' $\square$ characters. Whenever $\pi$ tests whether an index is the min or the max, $\pi_b$ tests whether the same index is the min or max \emph{non-blank} index. Whenever $\pi$ takes the successor or predecessor of an index, $\pi_b$ goes forward or backward to the next non-blank index. Here, the limited recursive capabilities of a BMRS are enough: all we have to remember are what state we're in and what direction we're going, and use tail recursion to skip over any blank indices we see. In so doing, we avoid having to count anything.

So much for the overview, let us see how it works more formally. For the remainder of this section, let $\pi : \Sigma_\square \times m \to \Gamma$ be a fixed strict interpretation.

\subsection{A normal form}
We can ensure that all calls to $\max$ and $\min$ in $\pi$ have the form $\max(\x)$ and $\min(\x)$ respectively. This construction is rather artificial: simply make two additional recursive functions $\f_{\max}(\x) \equiv \max(\x)$ and $\f_{\min}(\x) \equiv \min(\x)$, and replace all other occurrences of $\max(T)$ or $\min(T)$ by $\f_{\max}(T)$ and $\f_{\min}(T)$ respectively. In precisely the same way, we can ensure that all calls to any character $\gamma \in \Gamma$ have the form $\gamma(\x)$.

Next, we can ensure that every recursive call has one of three forms: $\f(\x)$, $\f(\suc \x)$, and $\f(\pred \x)$, for some recursive function name $\f$. Due to the restrictions of BMRS syntax, it's already the case that each recursive call is of the form $\f(T)$, for some index-valued term $T$, and each index-valued term is a string of $\suc$'s and $\pred$'s applied to $\x$. We can reduce this to the above three forms by adding more recursive functions. For example, $\f(\suc \suc \x)$ can be replaced by $\mathtt{g}(\suc \x)$, where $\mathtt{g}(\x) \equiv \f(\suc \x)$.

So what we have done is shown that we can assume every call to a boolean-valued primitive ($\max$, $\min$, or some $\gamma$) in $\pi$ is of the form $\max(\x)$, $\min(\x)$, and $\gamma(\x)$, and every call to a recursive function $\f$ in $\pi$ is of the form $\f(\x)$, $\f(\suc \x)$, or $\f(\pred \x)$.

\subsection{A program transformation}
We now define a program transformation $p \mapsto p^\star$ from headless $\Gamma$-BMRS's to headless $\Gamma_\square$-BMRS's; the rough idea being that for any string $t \in \Gamma^\star$, $p^\star$ will do on any padding of $t$ what $p$ does on $t$.

In the definitions below, by a ``blank'' or ``non-blank'' index of a given string, we simply mean an index which does or does not carry the character $\square$ respectively.

\begin{definition}
For any string $s \in \Gamma_\square^\star$, let $x \mapsto x^\star$ be the map from non-blank indices of $s$ to indices of $d(s)$ defined by $x^\star = x - \delta$, where $\delta$ is the number of blank predecessors of $x$ in $s$.

Extend $x \mapsto x^\star$ by two partial maps $x \mapsto x^S$ and $x \mapsto x^P$ from \emph{all} indices of $s$ to indices of $d(s)$. Namely, $x^S = y^\star$, where $y$ is the least non-blank index $\ge x$, and $x^P = z^\star$, where $z$ is the greatest non-blank index $\le x$.\footnote{These maps may be partial as $x^S$ and $x^P$ will not be well-defined if there is no succeeding or preceding non-blank index respectively.}
\end{definition}

For example, if $\Gamma = \{a,b\}$ and $s = a \square \square b \square a$, then $d(s) = aba$ and $0^\star = 0$, $3^\star = 1$, and $5^\star = 2$. Moreover $0^P = 1^P = 2^P = 0$, $3^P = 1$, and $4^P = 5^P = 2$. Finally $0^S = 0$, $1^S = 2^S = 3^S = 1$, and $4^S = 5^S = 2$. Note that all of these maps are non-decreasing and that $x \mapsto x^\star$ is a bijection between non-blank indices of $s$ and indices of $d(s)$. 

Now consider the function $\maxc$ defined by:
\begin{align*}
    \maxc(\x) &\equiv \ite{\square(\x)}{\f(\x)}{\false} \\
    \f(\x) &\equiv \ite{\max(\x)}{\true}{\ite{\square(\suc \x)}{\false}{\f(\suc \x)}}.
\end{align*}
Then $\maxc$ detects the maximum non-blank index. We can similarly define $\minc$.

The following remark encapsulates several basic properties of the maps defined so far:

\begin{rem}\label{rem:properties of blanked out string}
For any string $s \in \Gamma_\square^\star$, any non-blank index $x$ of $s$ and any character $\gamma \in \Gamma$, we have:
\begin{enumerate}
    \item  $s \models \gamma(x) \iff d(s) \models \gamma(x^\star),$
    \item  $s \vdash \maxc(x) \to \true \iff d(s) \models \max(x^\star),$
    \item $s \vdash \minc(x) \to \true \iff d(s) \models \min(x^\star),$
    \item $ s \vdash \maxc(x) \to \false \iff d(s) \models \neg \max(x^\star),$
    \item $ s \vdash \minc(x) \to \false \iff d(s) \models \neg \min(x^\star),$
    \item $x^S = x^P = x^\star$, and
    \item $\pred x^\star = (\pred x)^P$, and $\suc x^\star = (\suc x)^S $.
\end{enumerate}
Finally, for any \emph{blank} index $x$ of $s$, $x^S = (\suc x)^S$ and $x^P = (\pred x)^P$.
\end{rem}

\begin{definition}\label{def:T star}
For each recursive function name $\f$, let $\f^\star$, $\f^S$, and $\f^P$ be three distinct recursive function symbols.
We define a transformation $T \mapsto T^\star$ from boolean-valued $\Gamma$-terms (in the above normal form) to boolean-valued $\Gamma_\square$-terms as follows.
\begin{itemize}
    \item If $T \equiv \true$ or $T \equiv \false$, then $T^\star \equiv T$.

    \item If $T \equiv \gamma(\x)$ for any character $\gamma \in \Gamma$, then $T^\star \equiv \gamma(\x)$.
    
    \item If $T \equiv \max(\x)$ or $T \equiv \min(\x)$, then $T^\star \equiv \maxc(\x)$ or $T \equiv \minc(\x)$ respectively.
    
    \item If $T \equiv \f(\x)$, then $T^\star \equiv \f^\star(\x)$, for any recursive function name $\f$. 
    
    \item If $T \equiv \f(\suc \x)$, then $T^\star \equiv \f^S(\suc \x)$, for any recursive function name $\f$.
    
    \item If $T \equiv \f(\pred \x)$, then $T^\star \equiv \f^P(\pred \x)$, for any recursive function name $\f$.
    
    \item If $T \equiv \ite{T_0}{T_1}{T_2}$ then $T^\star \equiv \ite{T_0^\star}{T_1^\star}{T_2^\star}$.
\end{itemize}
\end{definition}

\begin{definition}
For any headless $\Gamma$-BMRS $\p = (\f_i(\x) = T_i)_{0 \le i \le k}$, define the $\Gamma_\square$-BMRS $\p^\star$ by $(\f_i^\star(\x) = T_i^\star)_{0 \le i \le k} $ plus, for each $0 \le i \le k$, 
$$ \f_i^S(\x) = \ite{\square(\x)}{\f_i^S(\suc \x)}{T_i^\star},$$
$$ \f_i^P(\x) = \ite{\square(\x)}{\f_i^P(\pred \x)}{T_i^\star}.$$
\end{definition}

\begin{theorem}\label{thm:correctness of star transf}
For any string $s \in \Gamma_\square^\star$, headless $\Gamma$-BMRS $p$, non-blank index $x$ of $s$, boolean-valued $p$-term $T$, and boolean $b$,
$$ d(s), x^\star \vdash_\p T \to b \implies s, x \vdash_{\p^\star} T^\star \to b.$$
Furthermore, for any index $x$ of $s$, any recursive function symbol $\f$, and any boolean $b$, if $x^S$ exists, then 
$$d(s), x^S \vdash_\p \f(\x) \to b \implies s,x \vdash_{\p^\star} \f^S(\x) \to b,$$ and if $x^P$ exists, then
$$ d(s),x^P \vdash_\p \f(\x) \to b \implies s,x \vdash_{\p^\star} \f^P(\x) \to b.$$
\end{theorem}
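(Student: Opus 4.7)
The plan is to prove the three implications simultaneously---call them (a), (b), (c) as they appear in the theorem---by a primary induction on the height of the derivation in $p$ over $d(s)$, combined with a secondary induction (for (b) and (c)) on the number of blank indices of $s$ that must be skipped to reach the target non-blank index.

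For (a) I would case-split on the structure of $T$. The constants, the character atom $\gamma(\x)$, and $\max(\x)$, $\min(\x)$ are handled directly from the definition of $T^\star$ together with Remark~1(1)--(5). The conditional $\ite{T_0}{T_1}{T_2}$ passes through to its sub-terms by the inductive hypothesis. For $T \equiv \f(\x)$, one unfold step rewrites $d(s), x^\star \vdash_p \f(\x) \to b$ to $d(s), x^\star \vdash_p T_\f \to b$ at strictly smaller height; since $\f^\star(\x) \equiv T_\f^\star$ in $p^\star$, the inductive hypothesis finishes the case. For $T \equiv \f(\suc \x)$, the semantics yields $d(s), v \vdash_p \f(\x) \to b$ with $v = \suc(x^\star) = (\suc x)^S$ by Remark~1(7); invoking (b) at strictly smaller derivation height then delivers $s, \suc x \vdash_{p^\star} \f^S(\x) \to b$, and hence $s, x \vdash_{p^\star} \f^S(\suc \x) \to b$. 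The case $T \equiv \f(\pred \x)$ is symmetric via (c).

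I would prove (b) by a secondary induction on $j = y - x$, where $y$ is the least non-blank index $\ge x$ (so $y^\star = x^S$). Recall $\f^S(\x) \equiv \ite{\square(\x)}{\f^S(\suc \x)}{T_\f^\star}$ in $p^\star$. When $j = 0$, $x$ is non-blank, $x^S = x^\star$ by Remark~1(6), the if-test branches right, and after unfolding $d(s), x^\star \vdash_p \f(\x) \to b$ one step to $d(s), x^\star \vdash_p T_\f \to b$, the claim follows from (a) at strictly smaller derivation height. When $j > 0$, $x$ is blank, $(\suc x)^S = x^S$ by the last clause of Remark~1, the if-test branches left, and the secondary induction at $\suc x$ with gap $j-1$ yields $s, \suc x \vdash_{p^\star} \f^S(\x) \to b$, which lifts back to $x$ through the function-call and if-then-else rules. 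Claim (c) is symmetric, traversing predecessors.

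The principal obstacle is the mutual entanglement of the three claims: (a) at the $\f(\suc \x)$ case invokes (b), whose base case invokes (a) on the body $T_\f$. These cross-calls are well-founded because each strips off either a primitive semantic step (going from $\f(\suc \x)$ to $\f(\x)$) or an $\f$-unfolding, so the primary derivation-height measure strictly decreases; the secondary induction on blank-gap is needed precisely to handle the otherwise-static situation where one fixed derivation in $p$ must witness the behaviour at many successive blank indices of $s$, a phenomenon that an induction on derivation height alone cannot capture.
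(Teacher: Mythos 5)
Your proposal follows the paper's proof almost step for step: the same simultaneous induction over the three claims, the same case analysis on the normal-form shape of $T$, the same appeals to Remark \ref{rem:properties of blanked out string} for the atomic cases and for the identities $x^S = x^P = x^\star$, $\suc x^\star = (\suc x)^S$, and $(\suc x)^S = x^S$ at blank $x$. The one genuine difference is your secondary induction on the blank-gap $j = y - x$, and this is a real improvement rather than a cosmetic one. The paper's proof is officially ``by induction on the length of the computation in $p$,'' but in the blank case of the second statement it passes from the hypothesis $d(s), x^S \vdash_\p \f(\x) \to b$ to $d(s), (\suc x)^S \vdash_\p \f(\x) \to b$, which, since $(\suc x)^S = x^S$, is the \emph{identical} judgment with the identical derivation --- the stated measure does not decrease there. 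What changes is only the position $x$ in $s$, and your blank-gap parameter is exactly the tie-breaking measure that makes this step well-founded. So your lexicographic (derivation height, blank gap) induction is the honest version of the paper's argument.

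One small point to tighten: in the case $T \equiv \f(\suc\x)$ of claim (a) you assert that the cross-call to (b) happens ``at strictly smaller derivation height.'' Under the big-step rules of Figure \ref{fig:semantics}, the derivation of $d(s), v \vdash_\p \f(\x) \to b$ has essentially the same height as that of $d(s), x^\star \vdash_\p \f(\suc\x) \to b$ (both sit one rule above the subderivation of $T^\f$ at $v$); what strictly decreases is the term, not the height. This is harmless --- the composite path from (a) at $\f(\suc\x)$ through (b) to (a) at $T_\f$ does strictly decrease height, or you can fold term size into the lexicographic measure --- but as written the justification of well-foundedness at that particular cross-call is slightly off. The paper glosses over the same point.
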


\begin{proof}
Fix $s$, $p$, $x$, $T$ and $b$. Let $t = d(s)$ be the underlying $\Gamma^\star$-string of $s$.
For legibility we omit the subscripts under the turnstiles. We understand $t,\dots \vdash \dots$ to mean $t, \dots \vdash_p \dots$ and $s , \dots \vdash \dots$ to mean $s, \dots \vdash_{p^\star} \dots$. We prove all three statements by simultaneous induction on the length of the computation in $p$.

First, let us prove $t,x^\star \vdash T \to b \implies s,x \vdash T^\star \to b.$ Assume $t,x^\star \vdash T \to b$.
If $T \equiv \gamma(\x)$ for some character $\gamma \in \Gamma$, then $T^\star \equiv \gamma(\x)$, and the conclusion follows from Remark \ref{rem:properties of blanked out string}. If $T \equiv \max(\x)$ or $\min(\x)$ then $T^\star \equiv \maxc(\x)$ or $\minc(\x)$ and the conclusion again follows from Remark \ref{rem:properties of blanked out string}.

Suppose that $T \equiv \f_i(\x)$, and let $\f_i(\x) = T_i$ be the recursive definition of $\f_i$ in $\p$. Then $t,x^\star \vdash T_i \to b$, and by induction, $s,x \vdash T_i^\star \to b$. But $\f_i^\star(\x) = T_i^\star$ is the recursive definition of $\f_i^\star$ in $\p^\star$, so $s,x \vdash \f_i^\star(\x) \to b$, which is what we want to prove, because $T^\star \equiv \f_i^\star(\x)$.

Suppose that $T \equiv \f(\pred \x)$. (The case that $T \equiv \f(\suc \x)$ is similar.) Since $t,x^\star \vdash \f(\pred \x) \to b$, $t, \pred x^\star \vdash \f(\x) \to b$. Furthermore suppose that $\pred x$ is a non-blank index of $s$. By Remark \ref{rem:properties of blanked out string} $t \models \pred x^\star = (\pred x)^P$.
Therefore, $t,(\pred x)^P \vdash \f(\x) \to b$. By induction, $s, \pred x \vdash \f^P(\x) \to b$, thus $s, x \vdash \f^P(\pred \x) \to b$, which is what we wanted to show, since $T^\star \equiv \f^P(\pred \x)$.

Now suppose that $y = \pred x$ is a blank index of $s$. Then $t \models \pred x^\star = y^P$ (Remark \ref{rem:properties of blanked out string}). Then $t, y^P \vdash \f(\x) \to b$; by induction, $s,y \vdash \f^P(\x) \to b$, so $s, x \vdash \f^P(\pred \x) \to b $, which is what we wanted to show.

Finally, suppose that $T \equiv \ite{T_0}{T_1}{T_2}$ and $t,x^\star \vdash T_0 \to \top$. (The case $t,x^\star \vdash T_0 \to \bot$ is similar, replacing $T_1$ by $T_2$.) Then $t,x^\star \vdash T_1 \to b$, so $s,x \vdash T_0^\star \to \top$, and $s,x \vdash T_1^\star \to b$. Hence, $s,x \vdash T^\star \to b$. This concludes the proof of the first statement.

Next, let us prove that $t,x^S \vdash \f(\x) \to b \implies s,x \vdash \f^S(\x) \to b$.
(The proof of $t,x^P \vdash \f(\x) \to b \implies s,x \vdash \f^P(\x) \to b$ is similar.)
Assume $t,x^S \vdash \f(\x) \to b$.

Suppose first that $x$ is a non-blank index, so $x^S = x^\star$. If $\f(\x) \equiv \f_i(\x)$, then $t,x^\star \vdash T_i \to b$. By induction, $s,x \vdash T_i^\star \to b$. Therefore
$$ s,x \vdash \ite{\square(\x)}{\f_i^S(\suc \x)}{T_i^\star} \to b,$$
which means that $s,x \vdash \f_i^S(\x) \to b$.

Finally, suppose that $x$ is the index of a blank character. Then $t \vdash (\suc x)^S = x^S$ by Remark \ref{rem:properties of blanked out string} and hence $t,(\suc x)^S \vdash \f(\x) \to b$. By induction, $s,\suc x \vdash \f^S(\x) \to b$. Therefore,
$$ s,x \vdash \ite{\square(\x)}{\f^S(\suc \x)}{\f(\x)} \to b, $$
which means $s,x \vdash \f^S(\x) \to b$. This concludes the proof.
\end{proof}

\subsection{The definition of $\pi_b$}
Finally, we are in a position to define $\pi_b$ from $\pi$. Recall that $\pi$ is a strict interpretation of type $\Sigma_\square \times m \to \Gamma$.

\begin{definition}
Define the interpretation $\pi_b : \Sigma_\square \times m \to \Gamma_\square$ as follows:
\begin{itemize}
    \item The body of $\pi_b$ is $p^\star$, where $p$ is the body of $\pi$.
    
    \item For each character $\sigma \in \Sigma$ and $i < m$, the head $\pi_b(\sigma,i)$ is
    $$ \ite{\square(\x)}{\false}{\pi(\sigma,i)^\star} .$$
    
    \item For each $i < m$, the head $\pi_b(\square,i)$ is
    $$ \ite{\square(\x)}{\true}{\pi(\square,i)^\star}.$$
\end{itemize}
\end{definition}

\begin{lemma}
$\pi_b$ is a strict interpretation
\end{lemma}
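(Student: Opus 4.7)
The plan is to case-split on whether the index $x$ is blank in $s$. If $s, x \vdash \square(\x) \to \true$, then every head $\pi_b(\sigma, i)$ with $\sigma \in \Sigma$ takes its then-branch and evaluates to $\false$, while $\pi_b(\square, i)$ takes its then-branch and evaluates to $\true$. So exactly one head returns $\true$ and every other returns $\false$, which is strictness at $x$.

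For the non-blank case, $s, x \vdash \square(\x) \to \false$, and each head $\pi_b(\tau, i)$ reduces via its else-branch to $\pi(\tau, i)^\star$ at $s,x$. Let $t = d(s)$; since $x$ is non-blank, $x^\star$ is a well-defined index of $t$. Since $\pi$ is strict, there exists a unique $\tau^* \in \Sigma_\square$ with $t, x^\star \vdash_p \pi(\tau^*, i) \to \true$, and $t, x^\star \vdash_p \pi(\tau, i) \to \false$ for every other $\tau$. Applying Theorem \ref{thm:correctness of star transf} (with $T = \pi(\tau, i)$) to each of these derivations transfers them to the starred program: $s, x \vdash_{p^\star} \pi(\tau^*, i)^\star \to \true$, and $s, x \vdash_{p^\star} \pi(\tau, i)^\star \to \false$ for $\tau \neq \tau^*$. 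Hence exactly one head of $\pi_b$ fires at $x$, again yielding strictness.

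Two minor points deserve attention but are not really obstacles. First, Definition \ref{def:T star} applies only to boolean-valued terms in the normal form established earlier in this section, so we must preface the argument by assuming each head $\pi(\tau, i)$ has been put in that form; this is the semantics-preserving rewriting already noted there and costs us nothing. Second, one might worry that the forward-only direction of Theorem \ref{thm:correctness of star transf} leaves open the possibility that some $\pi(\tau, i)^\star$ with $\tau \neq \tau^*$ \emph{also} derives $\true$ at $s, x$. This worry is unfounded: the theorem's forward direction, applied simultaneously to the $\true$-head and every $\false$-head, already supplies both the unique true head and the uniformly false behaviour on the others that well-definedness and strictness demand. The substantive technical work has been discharged in Theorem \ref{thm:correctness of star transf}; what remains here is essentially bookkeeping.
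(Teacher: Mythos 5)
Your proof is correct and follows essentially the same route as the paper's: case-split on whether $x$ is blank, handle the blank case directly from the form of the heads, and in the non-blank case transfer the strictness of $\pi$ at $x^\star$ through Theorem \ref{thm:correctness of star transf}. Your two additional remarks (the normal-form assumption and the observation that the forward implication applied with $b = \false$ already forces the other heads to evaluate to $\false$) are sound and merely make explicit what the paper leaves implicit.
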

\begin{proof}
Fix a string $s \in \Gamma_\square^\star$, $x < |s|$, and $i < m$. If $s_x = \square$ then $s,x \vdash \pi_b(\square,i) \to \true$ and $s,x \vdash \pi_b(\sigma,i) \to \false$ for every $\sigma \in \Sigma$.

Otherwise $x$ is a non-blank index of $s$. By strictness of $\pi$, there is a unique $\sigma \in \Sigma_\square$ such that $d(s) , x^\star \vdash \pi(\sigma,i) \to \true$, and for every other $\tau \in \Sigma_\square$, $d(s), x^\star \vdash \pi(\tau,i) \to \false$. By Theorem \ref{thm:correctness of star transf}, $s,x \vdash \pi_b(\sigma,i) \to \true$, and $s,x \vdash \pi_b(\tau,i) \to \false$ for every other $\tau$. This proves strictness of $\pi_b$.
\end{proof}

Finally, we prove correctness.

\begin{theorem}\label{thm:correctnes of pi_b}
For any string $s \in \Gamma_\square^\star$, $d(\sempar{\pi_b}(s)) = d(\sempar{\pi}(d(s)))$.
\end{theorem}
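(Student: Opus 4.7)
The plan is to compute both sides index-by-index, using the preceding strictness lemma for $\pi_b$ and Theorem \ref{thm:correctness of star transf} to transport the character computation of $\pi_b$ on a padded string $s$ over to the character computation of $\pi$ on the underlying string $t = d(s)$, and then to verify that blank-deletion matches up on both sides.

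First I would fix $s$ and let $t = d(s)$. Both $\sempar{\pi_b}(s)$ and $\sempar{\pi}(t)$ are well-defined strict outputs, of lengths $m|s|$ and $m|t|$ respectively. For any index $x = mq + r$ of $\sempar{\pi_b}(s)$ with $q < |s|$ and $r < m$, I would split into two cases. If $s_q = \square$, the $\ite$-guards $\square(\x)$ in the heads of $\pi_b$ force $\pi_b(\square,r)$ to evaluate to $\true$ and every other $\pi_b(\sigma,r)$ to $\false$, so position $x$ carries $\square$. If $s_q \neq \square$, those guards evaluate to false and each head reduces to $\pi(\sigma,r)^\star$; applying Theorem \ref{thm:correctness of star transf} exactly as in the preceding strictness lemma, the unique $\sigma \in \Sigma_\square$ with $s, q \vdash \pi_b(\sigma, r) \to \true$ is exactly the unique $\sigma$ with $t, q^\star \vdash \pi(\sigma, r) \to \true$. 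Thus position $mq + r$ of $\sempar{\pi_b}(s)$ and position $m q^\star + r$ of $\sempar{\pi}(t)$ carry the same character.

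Next I would read off the consequence for non-blank positions. The non-blank indices of $\sempar{\pi_b}(s)$ are precisely those $mq + r$ such that $s_q \neq \square$ and $\sempar{\pi}(t)$ carries a non-blank character at $m q^\star + r$. The assignment $\phi : mq + r \mapsto m q^\star + r$ is therefore a bijection from the non-blank indices of $\sempar{\pi_b}(s)$ to the non-blank indices of $\sempar{\pi}(t)$, since $q \mapsto q^\star$ is a bijection from the non-blank indices of $s$ to the indices of $t$. Moreover $\phi$ is order-preserving: $q \mapsto q^\star$ is strictly increasing on non-blank indices of $s$, so the lexicographic comparison of $mq_1 + r_1$ and $mq_2 + r_2$ translates coordinate-by-coordinate to that of $m q_1^\star + r_1$ and $m q_2^\star + r_2$. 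Since $\phi$ preserves both the linear order and the character at each position, the resulting underlying $\Gamma$-strings $d(\sempar{\pi_b}(s))$ and $d(\sempar{\pi}(t))$ coincide, which is the desired equality.

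The main obstacle has in fact already been discharged in Theorem \ref{thm:correctness of star transf} and the preceding strictness lemma for $\pi_b$; relative to those, the present statement is a bookkeeping argument about how $x \mapsto x^\star$ interacts with the lexicographic product order on $|s| \times m$. The one delicate point is that I need the character-level equivalence in \emph{both} directions of the biconditional, whereas Theorem \ref{thm:correctness of star transf} only gives one direction; this is licensed by well-definedness of $\pi$ together with strictness of $\pi_b$, since each side computes a unique boolean value and so the one-way implication upgrades automatically to an iff.
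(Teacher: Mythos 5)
Your proposal is correct and follows essentially the same route as the paper's proof: both split indices $mq+r$ of $\sempar{\pi_b}(s)$ according to whether $s_q$ is blank, transport the character computation to $d(s)$ via Theorem \ref{thm:correctness of star transf} (upgraded to a biconditional by strictness, exactly as in the preceding lemma), and conclude via the monotone character-preserving correspondence $mq+r \mapsto mq^\star+r$ that $\sempar{\pi_b}(s)$ is a blank-padding of $\sempar{\pi}(d(s))$. The only difference is presentational: the paper phrases the final step as exhibiting $\sempar{\pi}(d(s))$ as the restriction of $\sempar{\pi_b}(s)$ to the index set $Y=\{mq+r : s_q\neq\square\}$, while you phrase it as an order-preserving bijection on non-blank output positions.
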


\begin{proof}
Fix a string $s \in \Gamma^\star_\square$ and let $t = d(s)$ be the underlying string in $\Gamma^\star$. Let $u = \sempar{\pi_b}(s)$ and $v = \sempar{\pi}(t)$, so that $|u| = m|s|$ and $|v| = m|t|$; we want to show that $d(u) = d(v)$. Instead of showing this directly, we show that $u$ can be obtained from $v$ by padding it with more $\square$'s, which comes to the same thing.

Let $X \subseteq |s|$ be the set of non-blank indices of $s$, so that $|X| = |t|$ and $x \mapsto x^\star$ is a bijection $X \to |t|$. Let $Y = \{y < m|s| : y \div m \in X\}$; here $\div$ refers to integer division. Then $Y$ can be identified as a set of indices of $u$; moreover $|Y| = |v|$. Define the map $f : Y \to |v|$ by $f(mq + r) = mq^\star + r$ for every $q \in X$ and $r < m$. Then $f$ is the unique monotone bijection between $Y$ and indices of $v$.

Let $y$ be an index of $u$ not in $Y$. Let $q$ and $r$ be the quotient and remainder of $y \div m$ respectively. Then $q \notin X$, so $s \models \square(q)$. Therefore $s,q \vdash \pi_b(\square,r) \to \true$, so $\sempar{\pi_b}(s) \vdash \square(y)$. In other words, $\sempar{\pi_b}(s)$ carries $\square$ on any index outside $Y$.

Now let $y$ be an index of $u$ in $Y$, and again, let $q$ and $r$ be the quotient and remainder upon division by $m$. Let $z = f(y) = mq^\star + r$ and let $\sigma$ be the character of $v$ at index $z$, so that $\sempar{\pi}(t) \models \sigma(z)$. Then $t,q^\star \vdash \pi(\sigma,r) \to \true$, and by Theorem \ref{thm:correctnes of pi_b}, $s,q \vdash \pi(\sigma,r)^\star \to \true$. By definition of $\pi_b$, $s,q \vdash \pi_b(\sigma,r) \to \true$; hence, $\sempar{\pi_b}(s) \models \sigma(y)$. In other words, for every $y \in Y$, the character of $\sempar{\pi_b}(s)$ at $y$ is the same as the character of $\sempar{\pi}(t)$ at $f(y)$.

What we have done is partitioned the indices of $u$ into $Y$ and $|u| \setminus Y$. In the former part, $v$ appears as a substring; in the latter part, we have only blank characters. Therefore, $u$ is obtained from $v$ by padding it with blanks, and hence $d(u) = d(v)$.
\end{proof}

\section{BMRS and the rational functions}\label{s:rational}

We can now turn to the main results of the paper. 
First, we define a general composition operation for BMRS interpretations that are not necessarily strict.
Note  the overloading of $\otimes$ for both strict- and non-strict interpretations. 

\begin{definition}
  \label{def:comp}
  For two order-preserving BMRS interpretations $\rho : \Delta \times n \to \Sigma$ and $\pi : \Sigma \times m \to \Gamma$, let 
  \[\rho\otimes\pi=(\rho_b'\otimes\pi')^\dagger,\]
  where $\rho'_b$ is taken as $(\rho')_b$.
\end{definition}

Then we have

\begin{theorem}
  \label{thm:comp}
  For any $\rho$ and $\pi$ as above, $\sempar{\rho\otimes\pi}=\sempar{\rho}\circ\sempar{\pi}$. 
\end{theorem}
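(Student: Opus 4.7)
The plan is to unfold the definition of $\rho \otimes \pi = (\rho'_b \otimes \pi')^\dagger$ and apply the four correctness results established so far in a single chain of equalities, treating the theorem as the capstone that glues the pieces together. Before starting, I would quickly verify that the construction type-checks: $\pi' : \Sigma_\square \times m \to \Gamma$ and $\rho' : \Delta_\square \times n \to \Sigma$ by strictification, then $\rho'_b : \Delta_\square \times n \to \Sigma_\square$ by blank-enrichment, so that $\rho'_b$ and $\pi'$ become composable in the strict sense and yield $\rho'_b \otimes \pi' : \Delta_\square \times mn \to \Gamma$; applying $(-)^\dagger$ then gives an interpretation $\Delta \times mn \to \Gamma$, as required.

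The core of the proof is then the following chain, which I would write out line by line with each step justified by a single prior lemma:
\begin{align*}
\sempar{\rho \otimes \pi}
  &= \sempar{(\rho'_b \otimes \pi')^\dagger} \\
  &= d \circ \sempar{\rho'_b \otimes \pi'}
     &&\text{(Lemma \ref{lem:destrictification correctness})} \\
  &= d \circ \sempar{\rho'_b} \circ \sempar{\pi'}
     &&\text{(Lemma \ref{lem:strict comp correctness})} \\
  &= d \circ \sempar{\rho'} \circ d \circ \sempar{\pi'}
     &&\text{(Theorem \ref{thm:correctnes of pi_b}, applied to $\rho'$)} \\
  &= \sempar{\rho} \circ d \circ \sempar{\pi'}
     &&\text{(Lemma \ref{lem:strictification correctness} on $\rho$)} \\
  &= \sempar{\rho} \circ \sempar{\pi}
     &&\text{(Lemma \ref{lem:strictification correctness} on $\pi$).}
\end{align*}

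To invoke each lemma I must check its hypotheses. Lemma \ref{lem:destrictification correctness} requires $\rho'_b \otimes \pi'$ to be well-defined and strict; strictness of $\pi'$ comes from Lemma \ref{lem:strictification correctness}, strictness of $\rho'_b$ from the dedicated lemma in Section \ref{s:blank}, and strictness is preserved by $\otimes$ as noted before Lemma \ref{lem:strict comp correctness}. Lemma \ref{lem:strict comp correctness} then applies directly. Theorem \ref{thm:correctnes of pi_b}, instantiated with $\rho'$ playing the role of ``$\pi$'' in that theorem, states exactly $d(\sempar{(\rho')_b}(s)) = d(\sempar{\rho'}(d(s)))$ for all $s$, i.e.\ $d \circ \sempar{\rho'_b} = d \circ \sempar{\rho'} \circ d$ as functions, which is how I use it.

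I do not anticipate any genuine obstacles here: all the technical difficulty has been absorbed into Section \ref{s:blank} (for $(-)_b$) and the strict composition construction of Section \ref{s:composition}. The only things to be careful about are bookkeeping of alphabets (keeping $\Sigma$ versus $\Sigma_\square$ straight at each arrow) and the order of applying $d$ on the two sides. So the proof I propose is essentially a two-sentence verification followed by the six-line calculation above.
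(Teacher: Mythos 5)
Your proposal is correct and matches the paper's own proof essentially line for line: the same chain of equalities, justified by the same four results (Lemmas \ref{lem:destrictification correctness}, \ref{lem:strict comp correctness}, and \ref{lem:strictification correctness}, and Theorem \ref{thm:correctnes of pi_b}) in the same order. Your additional type-checking and hypothesis verification is sound bookkeeping that the paper leaves implicit.
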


\begin{proof}
By Lemma \ref{lem:destrictification correctness}, $\sempar{\rho \otimes \sigma} = d \circ \sempar{\rho'_b \otimes \pi'}$. By Lemma \ref{lem:strict comp correctness}, $\sempar{\rho'_b \otimes \pi'} = \sempar{\rho'_b} \circ \sempar{\pi'}$. By Theorem \ref{thm:correctnes of pi_b}, $d \circ \sempar{\rho'_b} = d \circ \sempar{\rho'} \circ d$. Hence
$$ \sempar{\rho \otimes \sigma} = d \circ \sempar{\rho'} \circ d \circ \sempar{\pi'},$$ but by Lemma \ref{lem:strictification correctness}, $d \circ \sempar{\rho'} = \sempar{\rho}$ and $d \circ \sempar{\pi'} = \sempar{\pi}$. Hence $\sempar{\rho \otimes \sigma} = \sempar{\rho} \circ \sempar{\sigma}$.
\end{proof}

Finally we obtain our main characterization of rational functions by order-preserving BMRS interpretations. We crucially use a theorem of Elgot and Mezei that every rational function can be decomposed into a composition of a left- with a right-subsequential function.

\begin{theorem}[\cite{ElgotMezei}]
  \label{thm:em}
  For every rational function $f$, $f=g\circ h$ for some left-subsequential function $g$ and some right-subsequential function $h$. 
\end{theorem}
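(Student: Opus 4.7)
The plan is to realize $f$ by a one-way nondeterministic (functional) transducer $T$ and then decompose its computation into a right-to-left ``look-ahead'' pass followed by a left-to-right deterministic simulation. This is the classical subset-plus-realization strategy.

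First, I would fix a trim, functional, one-way nondeterministic transducer $T = (Q, q_0, \mathrm{Acc}, E)$ with input alphabet $\Sigma$ and output monoid $\Gamma^\star$ realizing $f$, where $E \subseteq Q \times \Sigma \times \Gamma^\star \times Q$ is its transition relation. For each position $x$ of an input $w \in \Sigma^\star$, define $S_x \subseteq Q$ to be the set of states from which the suffix $w[x..]$ admits an accepting run; equivalently $S_{|w|} = \mathrm{Acc}$ and $S_x = \{q : \exists\, (q, w[x], u, q') \in E,\ q' \in S_{x+1}\}$. The map that sends $w$ to the annotated string $(w[0], S_1)(w[1], S_2) \cdots (w[|w|-1], S_{|w|})$ can be computed by the reverse subset construction on the input-projection automaton of $T$, yielding a right-subsequential function $h : \Sigma^\star \to (\Sigma \times 2^Q)^\star$.

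Next, I would build a left-subsequential function $g : (\Sigma \times 2^Q)^\star \to \Gamma^\star$ whose state space is $Q$ with initial state $q_0$: on reading $(a, S)$ in state $q$, select, by a fixed canonical tie-break on $E$, a transition $(q, a, u, q') \in E$ with $q' \in S$; emit $u$ and move to $q'$. A final output of $\varepsilon$ is produced from states in $\mathrm{Acc}$, leaving $g$ undefined elsewhere.

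The main thing to verify is correctness, which splits into two sub-claims: (i) on any $w \in \mathrm{dom}(f)$ the run of $g$ on $h(w)$ never gets stuck and terminates in $\mathrm{Acc}$, and (ii) the emitted string equals $f(w)$. Claim (i) follows by induction on position: the state of $g$ always lies in some $S_x$ from which $w[x..]$ admits an accepting run, so a valid transition into $S_{x+1}$ is always available. Claim (ii) holds because the deterministic run of $g$ on $h(w)$ traces out \emph{an} accepting run of $T$ on $w$, whose output, by functionality of $T$, must equal $f(w)$. The main subtlety I anticipate is the indexing: the look-ahead set $S_{x+1}$ must be paired with the letter at position $x$ so that $g$ has it available at the moment it commits to a transition out of position $x$. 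This offset is exactly what forces $h$ to be right-subsequential rather than, say, an on-position annotation by $S_x$. Functionality of $T$---a nontrivial but standard property of rational functions presented by one-way nondeterministic transducers---is what rescues $g$ from having to guess ``the right'' accepting run among the branches left open by the lookahead.
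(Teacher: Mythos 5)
The paper does not prove this statement at all: Theorem~\ref{thm:em} is imported verbatim from Elgot and Mezei via the citation, and no argument for it appears anywhere in the text. So there is no ``paper proof'' to match against; your proposal stands or falls on its own. On its own terms it is correct, and it is the now-standard lookahead proof of the decomposition (the one found in modern treatments rather than Elgot and Mezei's original bimachine-flavoured argument): the right-subsequential pass computes, by the reverse subset construction, the co-accessibility sets $S_{x+1}$ and pairs them with the letters, and the left-subsequential pass resolves the nondeterminism of $T$ against that annotation, with functionality of $T$ guaranteeing that whichever accepting run the canonical tie-break selects produces $f(w)$. Your indexing is right (the annotation at position $x$ must be $S_{x+1}$, the target set of the transition being committed to), your induction for claim (i) is right (the state stays in $S_x$, hence ends in $S_{|w|}=\mathrm{Acc}$, and for $w\notin\mathrm{dom}(f)$ the run is stuck at the first step since $q_0\notin S_0$), and the domain of $g\circ h$ comes out equal to $\mathrm{dom}(f)$. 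The only things you are silently assuming are standard normalizations you should name: $T$ must be real-time on input (no input-$\varepsilon$ transitions) for the position-indexed sets $S_x$ to make sense, and you have fixed a transducer model without initial/final output words so that emitting $\varepsilon$ at acceptance is the right terminal behaviour; both normalizations are available for functional transducers, and the empty-input case ($q_0\in\mathrm{Acc}$) should be checked separately. None of these is a genuine gap.
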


\begin{theorem}
  For any well-defined order-preserving $\mathrm{BMRS}$ interpretation $\pi$, $\sempar{\pi}$ is a rational function. 
  Likewise, given a rational function $f$, $f=\sempar{\pi}$ for some order-preserving $\mathrm{BMRS}$ interpretation $\pi$. 
\end{theorem}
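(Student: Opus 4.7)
The plan is to prove both directions of the theorem, with the rational-to-BMRS direction being the substantive content and following from Theorem \ref{thm:comp} combined with the Elgot--Mezei decomposition and Theorem \ref{thm:subseq}, while the BMRS-to-rational direction follows from a routine embedding of BMRS into MSO together with the Bojanczyk--Filiot characterization of order-preserving MSO interpretations.

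For the backward direction (rational $\Rightarrow$ BMRS), let $f : \Gamma^\star \to \Delta^\star$ be a rational function. I would first apply Theorem \ref{thm:em} to factor $f = g \circ h$ where $h : \Gamma^\star \to \Sigma^\star$ is right-subsequential and $g : \Sigma^\star \to \Delta^\star$ is left-subsequential. By Theorem \ref{thm:subseq} there exists an order-preserving $\mathrm{BMRS}^\suc$ interpretation $\pi$ with $\sempar{\pi} = h$ and an order-preserving $\mathrm{BMRS}^\pred$ interpretation $\rho$ with $\sempar{\rho} = g$. Both are, a fortiori, order-preserving BMRS interpretations in the general sense. Applying Theorem \ref{thm:comp} to the composition operation of Definition \ref{def:comp}, I obtain $\sempar{\rho \otimes \pi} = \sempar{\rho} \circ \sempar{\pi} = g \circ h = f$, exhibiting $f$ as the transduction induced by a single order-preserving BMRS interpretation.

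For the forward direction (BMRS $\Rightarrow$ rational), the strategy is to observe that every BMRS interpretation is, in particular, an order-preserving MSO interpretation. The argument is essentially folklore: each boolean-valued recursive function symbol $\f$ in a headless BMRS, when evaluated at an index, unfolds as a tail-recursive traversal of the string via $\suc$ and $\pred$. The joint values of all recursive function symbols at all indices constitute the run of a two-way deterministic finite automaton on the input string, and such runs are definable in MSO. Consequently each head of the interpretation is equivalent to an MSO formula, and the interpretation qualifies as an order-preserving MSO interpretation. The result then follows immediately from the theorem of Bojanczyk \cite{bojanczyk14} and Filiot \cite{filiot15} that order-preserving MSO interpretations compute precisely the rational functions.

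The main technical obstacle of this development was overcome in Sections \ref{s:composition} and \ref{s:blank}, where Theorem \ref{thm:comp} establishes closure of order-preserving BMRS interpretations under composition despite the absence of any counting mechanism for tracking skipped output positions. With that closure in hand, the present theorem reduces to routine appeals to Elgot--Mezei, the subsequential characterization of Theorem \ref{thm:subseq}, and the Bojanczyk--Filiot characterization; no further novel ingredient is required.
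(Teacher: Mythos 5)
Your proposal is correct and follows essentially the same route as the paper: the backward direction via the Elgot--Mezei factorization, Theorem \ref{thm:subseq}, and the composition Theorem \ref{thm:comp}, and the forward direction by embedding BMRS interpretations into order-preserving MSO interpretations and invoking the Bojanczyk--Filiot characterization. The only difference is that you sketch the folklore BMRS-to-MSO embedding (via definability of tail-recursive runs) in slightly more detail than the paper, which states it as immediate.
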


\begin{proof}
  The forward direction is immediate from the fact that BMRS are a fragment of MSO, plus the fact that order-preserving MSO interpretations define rational functions \cite{bojanczyk14,filiot15}. In the backwards direction, given any rational function $f$, consider its decomposition as $g \circ h$ guaranteed by Theorem \ref{thm:em}. By Theorem \ref{thm:subseq} there are order-preserving BMRS interpretations $\rho$ and $\pi$ such that $g = \sempar{\rho}$ and $h = \sempar{\pi}$. Finally, $f = \sempar{\rho \otimes \pi}$ by Theorem \ref{thm:comp}.
\end{proof}

\section{Discussion and open questions}
\label{s:conclusion}

We factored composition of order-preserving interpretations into four sub-problems: strictification, de-strictification, strict composition, and blank enrichment. The solutions to the first three are very general and can be replicated in almost any logic. Strictification relies only on closure under boolean operations, strict composition relies on compositionality (syntactically, the ability to substitute terms for variables of like type), and de-strictification relies on nothing at all.

Syntactic blank enrichment, on the other hand, seems to call upon all of the limited recursive power afforded by BMRS. This suggests to us that it is an important problem to examine in the context of order-preserving interpretations over different logics. Moreover, our solution seems to adapt to the following slightly more general problem:

\smallskip
\begin{minipage}{0.55\textwidth}
Let's say we are given a relation $L$ which selects a subset of indices of any given string. Now we have two (strict) functions $f$ and $g$, and we want to ``hybridize" them with respect to $L$ so that, on a given string $s$, we apply $f$ ``on'' $L$ and $g$ ``outside" $L$. For example if $s = ababb$ and $L$ selects indices 0, 1, and 4 of $s$, then $s$ restricted to $L$ is $abb$ and $s$ restricted to its complement is $ab$. If $f(abb) = 101$ and $g(ab) = 22$, then the $L$-hybrid of $f$ and $g$ applied to $s$ is $10221$. (See figure.)
\end{minipage}
\begin{minipage}{0.4\textwidth}
    \[\begin{tikzcd}
	& ababb \\
	abb & {} & ab \\
	101 && 22 \\
	& 10221
	\arrow["L", from=1-2, to=2-1]
	\arrow["{\bar{L}}"', from=1-2, to=2-3]
	\arrow["f", from=2-1, to=3-1]
	\arrow["g"', from=2-3, to=3-3]
	\arrow["L", from=3-1, to=4-2]
	\arrow["{\bar{L}}"', from=3-3, to=4-2]
\end{tikzcd}\]
\end{minipage}
\smallskip

Blank enrichment is simply the special case of hybridization when $L$ is $\square(\x)$, $f$ is the given function, and $g$ is the constant-$\square$ function. It seems like the technique for blank enrichment will readily generalize to the more general problem. Many other natural functionals can be expressed in terms of hybridization, for example \emph{concatenation:} given two functions $f,g:\Sigma^\star \to \Gamma^\star$, define the function $\mathit{Concat}(f,g) : \Sigma^\star_\square \to \Gamma^\star_\square$ by $ u\square v \mapsto f(u) \square g(v)$, at least on strings with exactly one $\square$.

Concatenation-like functionals are an important example of functionals (or \emph{combinators}) used in algebraic characterizations of function classes such as the regular functions \cite{AFR14}. Realizing combinators syntactically by program transformations (as we have done here for rational functions) gives us a path towards capturing results different from---and arguably cleaner than---compiling logical interpretations into transducers and vice versa.

A clear next step is to see whether there is a BMRS characterization of the class of regular functions. Since these are captured by MSO interpretations (not necessarily order-preserving), one might guess that the same holds for BMRS. However it is not exactly clear what a non-order preserving BMRS interpretation might be: we would need to define the successor function on the output string, which means accommodating non-boolean valued recursive functions in our programs.

Another open direction involves adapting BMRS to other data types, like trees or graphs, which admit some notion of regularity. Here again we encounter the foundational problem of how to extend the boolean monadic paradigm to data types which do not fit neatly into it. The advantage of order-preserving BMRS interpretations over strings lies in their simplicity combined with their intensional expressiveness. We would regard these qualities as an acid test of any proposed extension of BMRS to other domains.

\bibliographystyle{alphaurl}
\bibliography{references}

\begin{thebibliography}{BCJO20}

\bibitem[AFR14]{AFR14}
Rajeev Alur, Adam Freilich, and Mukund Raghothaman.
\newblock Regular combinators for string transformations.
\newblock In {\em Proceedings of the Joint Meeting of the Twenty-Third EACSL
  Annual Conference on Computer Science Logic (CSL) and the Twenty-Ninth Annual
  ACM/IEEE Symposium on Logic in Computer Science (LICS)}, CSL-LICS '14, New
  York, NY, USA, 2014. Association for Computing Machinery.

\bibitem[B\"60]{buchi60}
J.~Richard B\"{u}chi.
\newblock Weak second-order arithmetic and finite automata.
\newblock {\em Zeitschrift f\"ur Mathematische Logik und Grundlagen der
  Mathmatik}, 6:66--92, 1960.

\bibitem[BCJO20]{bhaskaretal20}
Siddharth Bhaskar, Jane Chandlee, Adam Jardine, and Christopher Oakden.
\newblock Boolean monadic recursive schemes as a logical characterization of
  the subsequential functions.
\newblock In Alberto Leporati, Carlos Martín-Vide, Dana Shapira, and Claudio
  Zandron, editors, {\em Language and Automata Theory and Applications - LATA
  2020}, Lecture Notes in Computer Science, pages 157--169. Springer, 2020.

\bibitem[Boj14]{bojanczyk14}
Miko\l{}aj Boja\'nczyk.
\newblock Transducers with origin information.
\newblock In {\em 41st International Colloquium on Automata, Languages, and
  Programming (ICALP)}, volume 8573 of {\em LNCS}, pages 26--37. Springer,
  2014.

\bibitem[CJ21]{CJ21}
Jane Chandlee and Adam Jardine.
\newblock {\em Language}, 97(3):485--519, September 2021.

\bibitem[Cou94]{courcelle94}
Bruno Courcelle.
\newblock Monadic second-order definable graph transductions: a survey.
\newblock {\em Theoretical Computer Science}, 126:53--75, 1994.

\bibitem[EH01]{engelfriethoogeboom01}
Joost Engelfriet and Hendrik~Jan Hoogeboom.
\newblock {MSO} definable string transductions and two-way finite-state
  transducers.
\newblock {\em {ACM} Transations on Computational Logic}, 2:216--254, April
  2001.

\bibitem[EM65]{ElgotMezei}
C.~C. Elgot and J.~E. Mezei.
\newblock On relations defined by generalized finite automata.
\newblock {\em IBM Journal of Research and Development}, 9(1):47--68, 1965.

\bibitem[Fil15]{filiot15}
Emmanual Filiot.
\newblock Logic-automata connections for transformations.
\newblock In {\em Logic and Its Applications (ICLA)}, pages 30--57. Springer,
  2015.

\bibitem[GK04]{GK04}
Georg Gottlob and Christoph Koch.
\newblock Monadic datalog and the expressive power of languages for web
  information extraction.
\newblock {\em J. ACM}, 51(1):74–113, Jan. 2004.

\bibitem[Jon91]{Jones91}
Neil~D. Jones.
\newblock Efficient algebraic operations on programs.
\newblock In {\em University of Iowa}, pages 393--420, 1991.

\bibitem[McC59]{McCarthy59}
John McCarthy.
\newblock A basis for a mathematical theory of computation.
\newblock In P.~Braffort and D.~Hirschberg, editors, {\em Computer Programming
  and Formal Systems}, volume~26 of {\em Studies in Logic and the Foundations
  of Mathematics}, pages 33--70. Elsevier, 1959.

\bibitem[Mos19]{moschovakis19}
Yiannis~N. Moschovakis.
\newblock {\em Abstract recursion and intrinsic complexity}, volume~48 of {\em
  Lecture Notes in Logic}.
\newblock Cambridge University Press, 2019.

\bibitem[Sak09]{sakarovitch09}
Jacques Sakarovitch.
\newblock {\em Elements of Automata Theory}.
\newblock Cambridge University Press, 2009.

\bibitem[Sch06]{Sch06}
Nicole Schweikardt.
\newblock On the expressive power of monadic least fixed point logic.
\newblock {\em Theoretical Computer Science}, 350(2):325--344, 2006.
\newblock Automata, Languages and Programming: Logic and Semantics (ICALP-B
  2004).

\end{thebibliography}

\end{document}